\documentclass{article}

\usepackage[main, final]{neurips_2026}

\usepackage[utf8]{inputenc} 
\usepackage[T1]{fontenc}    
\usepackage{hyperref}       
\usepackage{url}            
\usepackage{booktabs}       
\usepackage{graphicx}       
\usepackage{amsmath}        
\usepackage{amssymb}        
\usepackage{amsfonts}       
\usepackage{amsthm}         
\usepackage{nicefrac}       
\usepackage{microtype}      
\usepackage{xcolor}         

\usepackage{multirow}
\usepackage{caption}
\usepackage{makecell}
\usepackage[normalem]{ulem}
\usepackage{wrapfig}
\usepackage{placeins}
\usepackage{comment}

\newtheorem{proposition}{Proposition}
\title{PocketVE: Stable and Property-Guided Structure-Based Drug Design with Variance-Exploding Diffusion}

\author{%
  Peining Zhang \\
  University of Connecticut\\
  Storrs, CT 06269, USA\\
  \texttt{peining.zhang@uconn.edu}
  \And
  Jinbo Bi \\
  University of Connecticut\\
  Storrs, CT 06269, USA\\
  \texttt{jinbo.bi@uconn.edu}
}

\begin{document}

\maketitle

\begin{abstract}
  Protein-conditioned 3D molecule generation is a central challenge in structure-based drug design, requiring a balance between pocket compatibility, molecular properties, and physical geometry.
  We propose \textbf{PocketVE}, a protein-pocket-conditioned variance-exploding (VE) diffusion framework that couples stable coordinate denoising with inference-time property guidance.
  Specifically, PocketVE combines an EDM-style training and sampling setup for 3D denoising, classifier-free guidance for multi-property steering, and adaptive protein perturbation as a training-time pocket regularizer.
  On the CrossDocked2020 benchmark under GenBench3D, PocketVE improves Valid$_{3\text{D}}$ from 58.6 to 80.6 and reduces strain energy from 457.4 to 127.9 relative to its guided TAGMol architectural parent; relative to TargetDiff, it attains comparable Valid$_{3\text{D}}$ with lower strain energy (127.9 vs.\ 306.0), while retaining competitive docking and molecular-property scores under moderate guidance.
  A guidance-scale study shows that moderate guidance gives a favorable balance between target-related objectives and geometric quality, whereas stronger guidance can degrade geometry and distributional fidelity.
  Pocket-permutation and PoseCheck diagnostics further support pocket-specific spatial compatibility with reduced steric conflicts.
  Overall, the results suggest that geometric stability and inference-time property guidance should be considered as coupled design objectives.
\end{abstract}
\section{Introduction}
\label{sec:introduction}
Generating drug-like molecules with desirable properties and high affinity to a given protein binding site, a task known as structure-based drug design (SBDD), sits at the intersection of 3D geometry and molecular property optimization.
A generated ligand must satisfy two classes of constraints: pocket-dependent ones (shape complementarity, binding pose, steric fit) and intrinsic molecular ones (drug-likeness, synthetic accessibility, logP)~\cite{isert2023structure,zhang2025unraveling}.
Generative models for molecular design have progressively transitioned from 1D strings~\cite{gomez2018automatic,loeffler2024reinvent} and 2D graphs~\cite{jin2018junction,shigraphaf} to direct 3D modeling~\cite{luo20213d,peng2022pocket2mol}.
Among 3D generative approaches, non-autoregressive diffusion models have substantially advanced target-aware generation~\cite{ho2020denoising,songscore,hoogeboom2022equivariant}: TargetDiff~\cite{guan3d} introduced SE(3)-equivariant diffusion to jointly denoise the full ligand without relying on a fixed generation order, while TAGMol~\cite{dorna24tagmol} demonstrated that gradient-guided sampling can steer the generative distribution toward regions of higher binding affinity.

Despite these advances, integrating strong protein-ligand interactions without deteriorating molecular geometry remains an open challenge.
First, the reverse sampling trajectory can be sensitive to the conditioning signal, creating a trade-off between steerability and stability.
While image models often work with bounded pixels or latent representations~\cite{rombach2022high}, where strong guidance can lead to overexposed outputs~\cite{zhang2024tackling,lou2023reflected}, guidance in SBDD acts directly on unconstrained Euclidean coordinates.
Consequently, excessive guidance pressure often drives the geometry toward non-physical conformations, as reflected by the atom clashes and high strain energy reported in recent GenBench3D evaluations~\cite{baillif2024benchmarking}, where higher strain indicates a less physically plausible 3D conformation.
Second, existing models primarily focus on learning the unconditional chemical distribution of the training data, leaving the integration of real-world multi-objective properties highly inflexible and poorly transferable~\cite{dorna24tagmol,jian2026general,dhariwal2021diffusion}.
As a result, adjusting how strongly a trained model pursues the protein target or a specific drug-like profile usually requires training a separate time-dependent classifier for every property of interest, making it poorly scalable when balancing multiple objectives.

Motivated by these challenges, this paper studies how conditional control changes the property--geometry trade-off in protein-conditioned 3D generation.
Our main point is that the guidance mechanism and coordinate backbone should be designed jointly, because both shape the reverse trajectory in Euclidean space.
In particular, a VE-style coordinate backbone provides a natural basis for stable denoising across noise scales, while pocket-aware scaling and sampling choices determine how that backbone interacts with the protein coordinate frame.
We therefore evaluate PocketVE as an integrated target-aware framework, without attributing its overall improvement to VE alone.
PocketVE combines VE/EDM-style coordinate parameterization, pocket-aware scale preservation, classifier-free guidance for multi-property steering, and training-time pocket perturbation.
Figure~\ref{fig:overview} summarizes the overall pipeline and the three design axes of PocketVE.

\begin{figure}[t]

\centering
\includegraphics[width=0.99\textwidth,trim=80pt 45pt 95pt 55pt,clip]{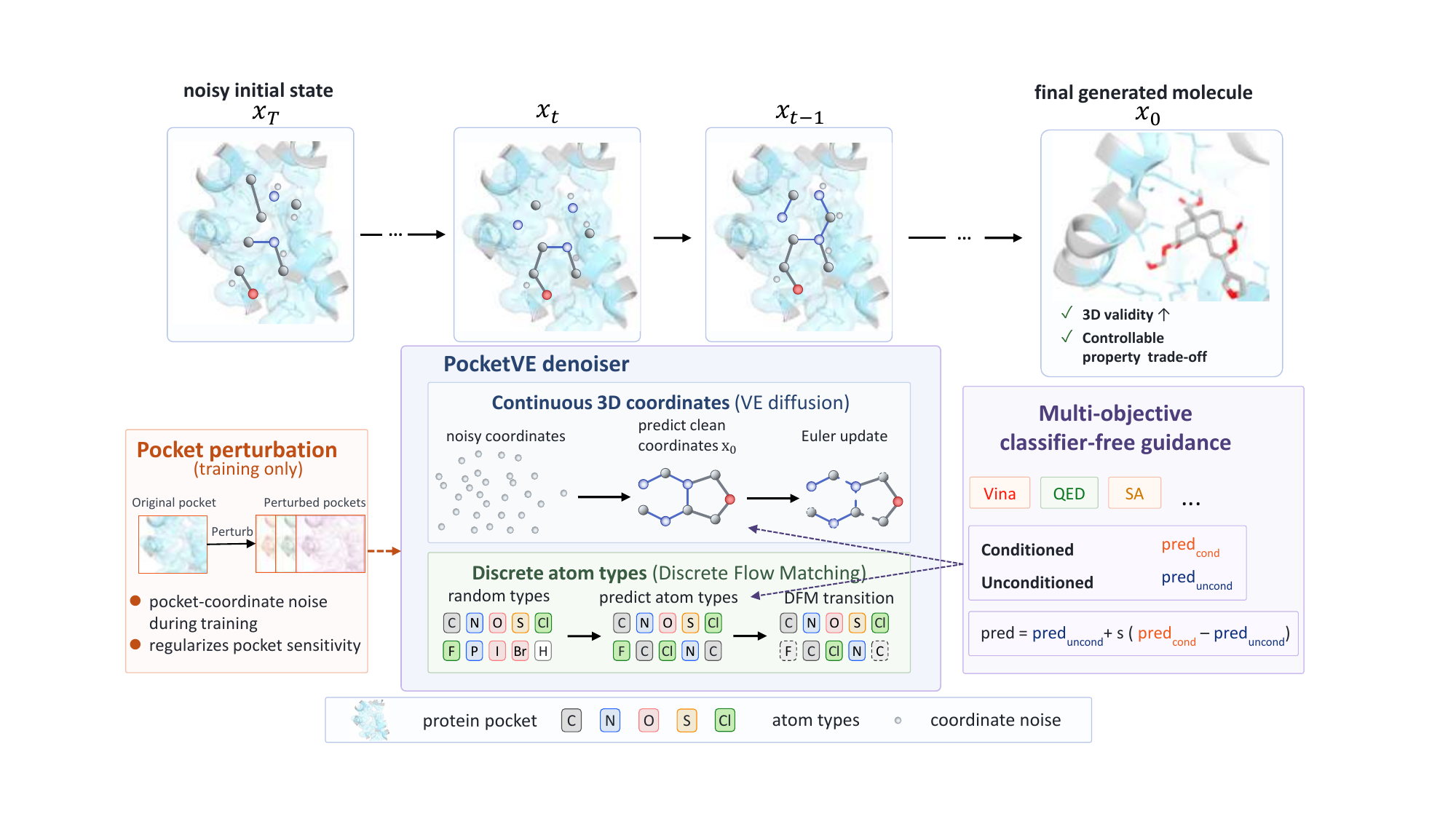}

\captionsetup{skip=4pt}
\caption{Overview of PocketVE.
PocketVE couples VE-based coordinate denoising with discrete flow matching for atom types.
Multi-objective classifier-free guidance provides inference-time property steering, while protein-pocket perturbation regularizes the denoiser during training.}
\label{fig:overview}

\end{figure}

We evaluate our framework on the CrossDocked2020 benchmark under the GenBench3D protocol~\cite{baillif2024benchmarking}.
Relative to guided TAGMol, PocketVE increases Valid$_{3\text{D}}$ from 58.6 to 80.6 and reduces strain energy from 457.4 to 127.9, while retaining competitive docking and molecular-property scores under moderate guidance.
TargetDiff and PAFlow provide complementary geometry and affinity references: PocketVE is comparable to TargetDiff in Valid$_{3\text{D}}$ and has lower strain, whereas PAFlow obtains stronger Vina scores at a substantial geometric cost.
Guidance experiments further show a trade-off between target-related properties and geometric fidelity.
Together, these findings suggest that geometric stability and conditional steering should be considered jointly in target-aware 3D molecular generation.

Our main contributions are as follows:
\begin{itemize}
  \item We develop a pocket-aware VE coordinate parameterization~\cite{karras2022elucidating} that preserves the shared protein--ligand coordinate frame while using EDM-style preconditioning for stable denoising, together with discrete flow matching for atom types.
  \item We characterize the guidance-scale trade-off in protein-conditioned 3D molecule generation, showing that Vina Score, QED, and SA respond differently from geometry metrics such as Valid$_{3\text{D}}$ and strain energy.
  \item We combine this backbone with multi-objective CFG and protein-side perturbation, improving the balance between structural validity, drug-like properties, and target-specific affinity on CrossDocked2020 under the GenBench3D protocol.
\end{itemize}

\section{Related Work}
\label{sec:related_work}
\paragraph{Target-Agnostic 3D Molecular Generation.}
In target-agnostic 3D molecular generation, Equivariant Diffusion Models learn coordinate denoising with E(3)-equivariant networks~\cite{hoogeboom2022equivariant}. 
Subsequent works extend this paradigm to latent representations and joint structure--geometry modeling~\cite{xu2023geometric,vignac2023midi}, as well as unified generation of atom types, coordinates, and bonds via flow matching~\cite{irwin2025semlaflow}.
More recently, VE-style parameterizations further improve geometric stability in 3D generation~\cite{zhang2026veda}. 
These advances suggest that geometric stability should be built into the generative backbone rather than enforced post hoc.
PocketVE brings this design principle into the protein-conditioned setting, where ligand coordinates must remain consistent with an explicit pocket geometry.

\paragraph{Structure-Based Drug Design.}
Early deep generative models for structure-based drug design (SBDD), such as 3D-SBDD~\cite{luo20213d} and Pocket2Mol~\cite{peng2022pocket2mol}, established spatial representations and architectures for pocket-conditioned generation.
However, many of these methods rely on autoregressive atom or motif placement, which is prone to error accumulation along a fixed generation order.
Non-autoregressive diffusion methods instead denoise all ligand atoms jointly.
TargetDiff~\cite{guan3d} introduced an SE(3)-equivariant formulation for target-aware 3D diffusion, while DiffSBDD~\cite{schneuing2024structure} further developed equivariant diffusion models for structure-based generation.
IPDiff~\cite{huang2024protein} incorporates protein--ligand interaction priors, BindDM adaptively extracts interaction-relevant protein--ligand subcomplexes~\cite{huang2024binding}, PAFlow~\cite{zhou2025prior} adopts prior-guided flow matching, and PocketXMol studies direct clean-coordinate prediction in pocket-conditioned molecular design~\cite{peng2026unified}.
Flow-based formulations are attractive for sampling efficiency, but recent evidence suggests that low-step generation alone does not guarantee chemically reliable 3D geometry~\cite{nikitin2025geom}.
PoseCheck~\cite{harris2023benchmarking} analyzes generated protein--ligand poses directly and shows that physical violations or missing key interactions may be obscured when evaluation relies on redocking.
GenBench3D~\cite{baillif2024benchmarking} further emphasizes the 3D conformation quality of generated ligands.
For example, it evaluates whether generated bond lengths and valence angles are consistent with reference distributions from 3D structure databases such as CSD~\cite{groom2016cambridge} and LigBoundConf~\cite{tong2021large}.
Together, these findings motivate our focus on the property--geometry trade-off: Vina-based docking scores and molecular properties should be interpreted together with pose and conformation quality.

\paragraph{Conditional Diffusion.}
Mainstream strategies for injecting conditional signals modify the reverse sampling trajectory via classifier guidance~\cite{dhariwal2021diffusion} or classifier-free guidance (CFG)~\cite{ho2022classifier}.
However, applying these paradigms to 3D molecular design is non-trivial: stronger conditional pressure can degrade physical geometry, causing atom clashes or pushing molecules off-distribution~\cite{baillif2024benchmarking}.
TAGMol~\cite{dorna24tagmol} uses external classifier guidance to steer sampling toward desired molecular properties, but this requires separate property predictors and scale calibration.
BADGER~\cite{jian2026general} further studies guidance in diffusion-based SBDD, combining classifier guidance and classifier-free guidance for binding-affinity control and extending the same framework to joint optimization over affinity, QED, and SA.
Other approaches, including alignment methods such as AliDiff~\cite{gu2024aligning} and reinforcement-learning-based guidance~\cite{zhou2025guiding}, directly optimize or fine-tune the generator toward preferred properties.
In particular, we study this trade-off under a VE-style 3D backbone, where stronger conditional control may improve target objectives while introducing geometric drift.

\section{Method}
\label{sec:method}
\subsection{Overview}
\label{subsec:method_overview}
We model the denoising process over a hybrid state space that contains continuous coordinates and discrete atom types.
The task of structure-based drug design (SBDD) is to learn a conditional distribution $p(\mathcal{L} \mid \mathcal{P})$ over ligands conditioned on a protein pocket.
A ligand with $N_L$ atoms is represented as $\mathcal{L}=(\mathbf{x},\mathbf{z})$, where $\mathbf{x}\in\mathbb{R}^{N_L\times 3}$ are atom coordinates and $\mathbf{z}\in\{0,1\}^{N_L\times S}$ are atom-type indicators.
The protein pocket is written as $\mathcal{P}=(\mathbf{y},\mathbf{c})$, where $\mathbf{y}\in\mathbb{R}^{M\times 3}$ are pocket atom coordinates and $\mathbf{c}$ are pocket atom features. Our goal is to model $p_\theta(\mathcal{L}\mid \mathcal{P})$, so that the generated ligand is both geometrically valid and compatible with the target pocket.

Our framework comprises three coupled components.
It uses an EDM-style~\cite{karras2022elucidating} generation backbone to stabilize continuous coordinate denoising, classifier-free guidance to control auxiliary property targets at inference time, and training-time protein perturbation to reduce over-reliance on a single rigid pocket realization.

\subsection{EDM-style generation backbone}
\label{sec:ve-paradigm}
For the continuous coordinates, we use a variance-exploding (VE) forward process. 
Given a clean ligand coordinate set $\mathbf{x}_0$, we sample a noise level $\sigma \sim p(\sigma)$ and form
\begin{equation}\label{eq:x_sigma}
  \mathbf{x}_\sigma = \mathbf{x}_0 + \sigma \boldsymbol{\epsilon}, \qquad \boldsymbol{\epsilon}\sim \mathcal{N}(\mathbf{0}, \mathbf{I}).
\end{equation}
In practice, the noise scale $\sigma$ is sampled from a log-normal distribution:
$\sigma \sim \operatorname{LogNormal}\!\left(\ln\sqrt{\sigma_{\min}\sigma_{\max}}, \left[\frac{1}{8}\ln(\sigma_{\max} / \sigma_{\min})\right]^2\right)$,
so $\sigma$ lies near the range $[\sigma_{\min}, \sigma_{\max}]$ in most cases and follows the usual VE/EDM-style noise schedule~\cite{karras2022elucidating}.

For atom types, we use a separate discrete corruption process.
Following the discrete flow matching view~\cite{campbell2024generative}, this branch learns a categorical denoising target over the same hybrid state space.
We use the corruption kernel
\begin{equation}\label{eq:z_corrupt}
  q(\mathbf{z}_\sigma \mid \mathbf{z}_0) = (1-m(\sigma))\,\delta(\mathbf{z}_\sigma=\mathbf{z}_0) + m(\sigma)\,\mathrm{Unif}(K),
\end{equation}
where $\mathrm{Unif}(K)$ is the uniform distribution over atom types.
The masking schedule is $m(\sigma) = \frac{\ln \sigma - \ln \sigma_{\min}}{\ln \sigma_{\max} - \ln \sigma_{\min}}$.
This keeps the continuous coordinate diffusion and the discrete type diffusion aligned at the same noise level.

A small but important target-aware detail is the input scaling used before the equivariant backbone.
In standard EDM, the noisy sample is often scaled as
\(
\mathbf{x}_\sigma / \sqrt{\sigma^2+\sigma_{\mathrm{data}}^2}
\)
before entering the backbone.
In our setting, however, ligand coordinates interact with unscaled protein-pocket coordinates inside the same distance-based geometric graph.
Directly using the standard factor would shrink low-noise ligand coordinates by $1/\sigma_{\mathrm{data}}$, while leaving protein coordinates unchanged, thereby distorting protein--ligand distances.
We therefore use the scale-preserving input $\tilde{\mathbf{x}}_\sigma
=
\frac{\sigma_{\mathrm{data}}}{\sqrt{\sigma^2+\sigma_{\mathrm{data}}^2}}
\mathbf{x}_\sigma$,
which recovers the original ligand scale as $\sigma\to 0$ and still attenuates high-noise inputs.
This preserves the physical coordinate frame shared by ligand and pocket atoms.
We use $\sigma_{\mathrm{data}}=10.0$ as a fixed coordinate-scale parameter, rather than estimating it as a per-ligand sample variance or normalizing each ligand independently.
This value covers both the internal ligand extent and the ligand displacement from the pocket center under pocket-centered normalization.

For the coordinate branch, we then use the EDM preconditioning form
\begin{equation}\label{eq:edm_denoiser}
  D_\theta(\mathbf{x}_\sigma, \mathbf{z}_\sigma, \sigma, \mathcal{P})
  =
  c_{\mathrm{skip}}(\sigma)\mathbf{x}_\sigma
  + c_{\mathrm{out}}(\sigma)\,
  F_\theta\!\left(\tilde{\mathbf{x}}_\sigma;\, c_{\mathrm{noise}}(\sigma),\, \mathbf{z}_\sigma,\, \mathcal{P}\right).
\end{equation}
Here $F_\theta$ is the equivariant backbone shared by both branches, and the skip term keeps the coordinate prediction anchored to the noisy input.
This plays the same role as the preconditioning in VEDA~\cite{zhang2026veda}, but here it is conditioned on the pocket.
Further coefficient details are provided in Appendix~\ref{appendix:edm_preconditioning}.

Atom types are recovered through a categorical prediction module built on top of the same equivariant features.
Its output
\(
H_\theta(\mathbf{x}_\sigma, \mathbf{z}_\sigma, \sigma, \mathcal{P}) \in \mathbb{R}^{N_L\times K}
\)
gives the logits for the clean atom types.
The clean ligand is then recovered by
\begin{equation}\label{eq:clean_pred}
  \hat{\mathbf{x}}_0 = D_\theta(\mathbf{x}_\sigma, \mathbf{z}_\sigma, \sigma, \mathcal{P}), \qquad
  \hat{\mathbf{z}}_0 = \mathrm{softmax}\!\big(H_\theta(\mathbf{x}_\sigma, \mathbf{z}_\sigma, \sigma, \mathcal{P})\big).
\end{equation}

The training objective is a joint denoising loss over geometry and atom identity:
\begin{equation}\label{eq:base_loss}
  \mathcal{L}_{\mathrm{base}}(\theta)
  =
  \mathbb{E}_{\mathcal{L}_0,\mathcal{P},\sigma}\bigg[
    \lambda_x(\sigma)\,\lVert \hat{\mathbf{x}}_0 - \mathbf{x}_0 \rVert_2^2
    + \lambda_z(\sigma)\,\mathrm{CE}\!\left(\hat{\mathbf{z}}_0,\mathbf{z}_0\right)
    \bigg].
\end{equation}

\subsection{Sampling procedure}
\label{subsec:sampling_procedure}
At inference time, we follow a fixed decreasing VE noise schedule.
At each step, the coordinate branch predicts the clean structure and applies a first-order Euler update:
\begin{equation}\label{eq:sampling_coord_update}
\begin{aligned}
\mathbf{d}_i
&=
\frac{\mathbf{x}_{\sigma_i}-D_\theta(\mathbf{x}_{\sigma_i},\mathbf{z}_{\sigma_i},\sigma_i,\mathcal{P})}{\sigma_i},
\qquad
\mathbf{x}_{\sigma_{i+1}}
=
\mathbf{x}_{\sigma_i}
+
(\sigma_{i+1}-\sigma_i)\mathbf{d}_i.
\end{aligned}
\end{equation}
Ligand coordinates are initialized around the pocket center, and ligand atom types are initialized from the uniform discrete prior.
For the discrete atom types, we update $\mathbf{z}_{\sigma_i}$ with a DFM-based discrete sampler driven by the categorical predictor $H_\theta$.
The number of ligand atoms is sampled from a prior conditioned on the estimated pocket size, matching the inference-time size prior used in TAGMol~\cite{dorna24tagmol}.
When classifier-free guidance is enabled, we use the same sampler with a tunable guidance weight.
The concrete noise range, generalized arcsin schedule, discrete-sampler details, and sampling-step ablations are provided in Appendix~\ref{appendix:sampling_details} and Appendix~\ref{appendix:sampling_steps}.

\subsection{Classifier-free guidance}
\label{subsec:cfg}
We focus on three commonly used generation-quality objectives: Vina for predicted binding affinity, QED for drug-likeness, and normalized SA for synthetic accessibility.
To make generation steerable, we discretize each target property into percentile bins computed on the training set and encode the resulting bin indices as an auxiliary condition $\mathbf{a}$.
Multi-objective control at inference time is specified by choosing one target bin per property and concatenating the resulting indices into the same auxiliary condition vector.
We apply condition dropout~\cite{ho2022classifier} to this auxiliary input during training while always keeping the pocket condition $\mathcal{P}$.
For notational simplicity, this auxiliary input was suppressed in the backbone definition above and is written explicitly only in this subsection.
Let $\varnothing$ denote the dropped auxiliary condition.
Following the standard CFG notation~\cite{ho2022classifier}, let $w$ denote the extrapolation weight between the conditional and dropped-condition branches.
In our implementation, however, we use the common sampling-time parameterization
\(
s = w + 1
\),
so that $s=0$ corresponds to the null / unconditional branch, $s=1$ recovers the plain conditional model, and $s>1$ produces the usual CFG extrapolation.
For comparison, an explicit classifier-guidance rule with the same scalar coefficient $w$ appears in the VE sampler update as
\begin{equation}\label{eq:classifier_guidance_ode}
\mathbf{d}_{\mathrm{guide}}
=
\mathbf{d}_{\mathrm{base}}
- w \sigma \nabla_{\mathbf{x}_\sigma}\log p_\phi(\mathbf{a}\mid \mathbf{x}_\sigma,\mathbf{z}_\sigma,\mathcal{P}).
\end{equation}
Here $\mathbf{d}_{\mathrm{guide}}$ and $\mathbf{d}_{\mathrm{base}}$ denote the guided and unguided VE update fields, respectively.
Thus, when $w$ is used as an external guidance coefficient, the property gradient enters the sampler through a $\sigma$ factor, making its effect noise-level dependent.
Appendix~\ref{appendix:cfg_noise_dependence} gives the corresponding score-level derivation and the equivalent $x_0$-space form.
In addition, explicit classifier guidance requires objective-specific calibration of both scale and direction.
Different property targets can live on very different numeric ranges, such as QED in $[0,1]$ versus Vina scores on a much wider scale, and some quantities are maximized while others, such as docking scores, are minimized.
As a result, when $w$ is used as an external guidance coefficient, its value and sign are generally not comparable across different property heads.

In contrast, CFG acts directly on the model predictions, which reduces the need for noise-level-dependent external-gradient calibration across different property metrics.
Therefore, CFG induces the same linear extrapolation in the $x_0$ prediction, but without introducing a separate property predictor.
For atom types, we apply the same CFG rule in the logit space of the categorical predictor, which is the standard practical counterpart of CFG for discrete outputs.
Under this parameterization, this becomes
\begin{equation}\label{eq:cfg_prediction}
\begin{aligned}
  \hat{\mathbf{x}}_\theta^{\mathrm{cfg}}
  &=
  D_\theta(\mathbf{x}_\sigma,\mathbf{z}_\sigma,\sigma,\mathcal{P},\varnothing)
  + s\!\left(
  D_\theta(\mathbf{x}_\sigma,\mathbf{z}_\sigma,\sigma,\mathcal{P},\mathbf{a})
  - D_\theta(\mathbf{x}_\sigma,\mathbf{z}_\sigma,\sigma,\mathcal{P},\varnothing)\right),\\
  \hat{\mathbf{z}}_\theta^{\mathrm{cfg}}
  &=
  \mathrm{softmax}\!\Big(H_\theta(\mathbf{x}_\sigma,\mathbf{z}_\sigma,\sigma,\mathcal{P},\varnothing)
  + s\!\left(
  H_\theta(\mathbf{x}_\sigma,\mathbf{z}_\sigma,\sigma,\mathcal{P},\mathbf{a})
  - H_\theta(\mathbf{x}_\sigma,\mathbf{z}_\sigma,\sigma,\mathcal{P},\varnothing)\right)\Big).
\end{aligned}
\end{equation}
Here the first line applies CFG to the coordinate prediction, while the second applies the same extrapolation to the type-logit branch before the softmax.
We provide the corresponding score-level derivation and its relation to standard CFG formulas in Appendix~\ref{appendix:cfg_recover}.
Larger $s$ makes the model follow the auxiliary property condition more strongly, while smaller $s$ keeps the sample closer to the property-unconditional distribution under the same pocket.

\subsection{Adaptive protein perturbation}
\label{subsec:protein_perturbation}

The protein pocket is not treated as a perfectly clean ground truth during training.
Experimental protein structures are subject to measurement error and capture only a static snapshot of a dynamic conformational ensemble.
To account for this, we apply a small perturbation operator
\begin{equation}\label{eq:pocket_perturb}
  \tilde{\mathcal{P}} = \mathrm{Perturb}(\mathcal{P}; \sigma_p),
\end{equation}
where $\sigma_p$ controls the scale of Gaussian coordinate perturbation on the pocket atoms.
We train the denoiser on $\tilde{\mathcal{P}}$ instead of $\mathcal{P}$.
This perturbation regularizes against over-reliance on a single crystallographic structure and encourages a neighborhood-aware conditional distribution.
Following the classical connection between input noise and smoothness regularization~\cite{bishop1995training,chapelle2000vicinal}, a second-order expansion of a smooth denoising loss around the clean pocket shows that Gaussian pocket perturbation adds a leading-order penalty on the denoiser's sensitivity to pocket coordinates.
For a squared denoising loss, this leading term contains $\frac{\sigma_p^2}{2}\lVert J_{\mathcal{P}} f_\theta\rVert_F^2$, where $J_{\mathcal{P}} f_\theta$ is the Jacobian of the denoising prediction with respect to the pocket input.
Thus, the perturbation discourages sharp changes in the coordinate and type predictions under small pocket-coordinate variations, while still preserving the pocket as the conditioning signal.
For the main results, we use a bounded noise-level-dependent schedule $\sigma_p=\min(0.1\sigma,0.5)$.
A formal statement and derivation are provided in Appendix~\ref{appendix:pocket_perturbation_details}.

\subsection{Full objective}
\label{subsec:full_objective}

The final training objective is
\begin{equation}\label{eq:full_loss}
  \mathcal{L}(\theta)
  =
  \mathbb{E}_{\mathcal{L}_0,\mathcal{P},\sigma,\tilde{\mathcal{P}}}
  \bigg[
    \lambda_x(\sigma) \lVert \hat{\mathbf{x}}_0^{\tilde{\mathcal{P}}} - \mathbf{x}_0 \rVert_2^2
    - \lambda_z(\sigma) \sum_{i=1}^{N_L} \sum_{k=1}^K (\mathbf{z}_0)_{ik} \log (\hat{\mathbf{z}}_0^{\tilde{\mathcal{P}}})_{ik}
    \bigg],
\end{equation}
where $\hat{\mathbf{x}}_0^{\tilde{\mathcal{P}}}$ and $\hat{\mathbf{z}}_0^{\tilde{\mathcal{P}}}$ denote the predictions calculated using the perturbed pocket $\tilde{\mathcal{P}}$. 
At test time, the same model can be run either unguided or with classifier-free guidance, which gives a steerable trade-off between fidelity and target-specific preference.
\section{Experiments}
\label{sec:experiments}
\subsection{Experimental setup}
\label{subsec:experimental_setup}
\paragraph{Dataset}
We train our models on the CrossDocked2020 dataset~\cite{francoeur2020three}.
Following prior work~\cite{luo20213d,peng2022pocket2mol,guan3d}, we filter out binding poses with an RMSD $> 1$ \AA{} and remove protein pairs with sequence identity $> 30\%$, resulting in a high-quality subset common in target-aware 3D molecule generation.
This split is the standard benchmark for target-aware 3D generation, but docking-centered evaluation alone does not fully characterize whether generated ligands are geometrically plausible.
To address this, we further adopt the GenBench3D protocol~\cite{baillif2024benchmarking}, which is designed to assess 3D conformation quality and check whether molecules remain physically plausible under stronger conditional guidance.

\paragraph{Baselines}
We compare \textbf{PocketVE} against representative target-aware 3D molecule generation baselines, including AR~\cite{luo20213d}, Pocket2Mol~\cite{peng2022pocket2mol}, TargetDiff~\cite{guan3d}, DecompDiff~\cite{guan2023decompdiff}, IPDiff~\cite{huang2024protein}, PAFlow~\cite{zhou2025prior}, SeFMol~\cite{zhang2026steering}, ALiDiff~\cite{gu2024aligning}, PocketXMol~\cite{peng2026unified}, and BindDM~\cite{huang2024binding}.
\textbf{TAGMol}~\cite{dorna24tagmol} is the architectural parent and ablation reference because it shares the base equivariant denoiser with PocketVE; this comparison reflects a bundled system transition rather than an isolated VE effect.
We use TargetDiff as the primary established geometry comparison and PAFlow as the strong-affinity comparison that exposes the affinity--geometry trade-off.
For baselines with released generated molecules, we re-evaluate those outputs under the same test split, docking protocol, and GenBench3D pipeline used for PocketVE.

\paragraph{Metrics}
We evaluate generated molecules on 100 test proteins, sampling 100 molecules per protein, and report both mean and median results.
For binding affinity, we use AutoDock Vina~\cite{eberhardt2021autodock} under the common setup of Luo et al.~\cite{luo20213d} and Ragoza et al.~\cite{ragoza2022generating}, reporting Vina Score, Vina Min, and Vina Dock.
We also report High Affinity, the per-pocket fraction of generated molecules whose Vina Dock score is no worse than the reference ligand, and Joint-Ref, the fraction that simultaneously satisfies no-worse Vina Dock, QED, and SA than the reference ligand.
For molecular properties, we report QED~\cite{bickerton2012quantifying}, normalized SA~\cite{ertl2009estimation}, and diversity.
For geometry, we report Valid$_{3\text{D}}$, clash-free rate, strain energy, and centroid distance under the GenBench3D protocol~\cite{baillif2024benchmarking}.
We also include TargetDiff-style distribution diagnostics based on Jensen-Shannon Divergence (JSD) over empirical ligand distance and atom-type statistics~\cite{guan3d}.
These JSD metrics are complementary distribution-fidelity diagnostics rather than explicit physical-validity metrics.

\paragraph{Implementation details}
PocketVE uses the target-aware equivariant network architecture of TAGMol~\cite{dorna24tagmol} as the base denoiser, while modifying the diffusion parameterization, guidance mechanism, sampling procedure, and optimizer.
For conditional control, we use training-set labels for Vina, QED, and normalized SA, discretize each property independently into five percentile bins, and apply joint condition dropout with probability $0.5$ during training.
In the main conditional experiments, the sampling condition uses the most favorable target bin for each property, namely the lowest Vina bin and the highest QED and SA bins.
We train for $400{,}000$ optimization steps with batch size $4$, validate every $2{,}000$ steps, and report the checkpoint with the lowest validation loss.
For optimization, we use Muon~\cite{jordan2024muon} for hidden two-dimensional weight matrices and AdamW~\cite{loshchilovdecoupled} for the remaining parameters, with learning rates $10^{-3}$ and $10^{-4}$, respectively.
Both learning rates use a plateau-based schedule: if the validation loss does not improve for $20{,}000$ steps, the learning rate is decayed by a factor of $0.6$.
For the main results, we use the noise-level-dependent pocket perturbation schedule $\sigma_p=\min(0.1\sigma,0.5)$; alternative perturbation schedules are studied in Appendix Table~\ref{tab:protein_perturbation}.
All main results use a 100-step sampler with the generalized arcsin schedule described in Appendix~\ref{appendix:sampling_details}.
Across CFG scales, we use the same trained checkpoint and sampling hyperparameters, so changing the guidance scale only affects inference-time control.
We provide the code for training, sampling, and evaluation in the supplementary material.

\subsection{Main Results}
\label{subsec:main_results}
\begin{table}[htbp]
  
  \begin{center}
    
    \captionsetup{skip=4pt}
\caption{
      Summary of binding affinity, molecular properties, geometric quality, and diversity for reference molecules and molecules
      generated by \textbf{PocketVE} ($s=5$) and other baselines. $(\uparrow) / (\downarrow)$ denotes a larger / smaller number is better. Top 2
      results are highlighted with \textbf{bold text} and \uline{underlined text}, respectively.
      PocketVE is reported from one standardized run (100 pockets $\times$ 100 ligands), which is one of the three runs summarized in Appendix Table~\ref{tab:run_variability}.
      TAGMol denotes the guided variant; the unguided variant appears in Table~\ref{tab:ablation_runtime}. A dash denotes an unavailable metric.}
    \small
    \setlength{\tabcolsep}{2.5pt}
    \renewcommand{\arraystretch}{1.25}
    \label{tab1}
      \resizebox{1\columnwidth}{!}{
      \begin{tabular}{l|cc|cc|cc|cccc|cccc|c}
        \hline
        \toprule
        \multicolumn{1}{c|}{\multirow{2}{*}{Method}} & \multicolumn{2}{c|}{Vina Score $(\downarrow)$} & \multicolumn{2}{c|}{Vina Min $(\downarrow)$} & \multicolumn{2}{c|}{Vina Dock $(\downarrow)$} & \multicolumn{1}{c}{\multirow{2}{*}{\makecell{High \\ Aff. ($\uparrow$)}}} & \multicolumn{1}{c}{\multirow{2}{*}{\makecell{Joint \\ Ref. ($\uparrow$)}}} & \multicolumn{1}{c}{\multirow{2}{*}{\makecell{QED \\ Med. ($\uparrow$)}}} & \multicolumn{1}{c|}{\multirow{2}{*}{\makecell{SA \\ Med. ($\uparrow$)}}} & \multicolumn{1}{c}{\multirow{2}{*}{\makecell{Valid$_{3\text{D}}$ \\ ($\uparrow$)}}} & \multicolumn{1}{c}{\multirow{2}{*}{\makecell{Strain \\ ($\downarrow$)}}} & \multicolumn{1}{c}{\multirow{2}{*}{\makecell{Clash \\ Free ($\uparrow$)}}} & \multicolumn{1}{c|}{\multirow{2}{*}{\makecell{Centroid \\ ($\downarrow$)}}} & \multicolumn{1}{c}{\multirow{2}{*}{\makecell{Diversity \\ Med. ($\uparrow$)}}}\\

        \multicolumn{1}{c|}{}                        & Avg.                                           & Med.                                         & Avg.                                          & Med.                                  & Avg.                                 & Med.                                                                                 & \multicolumn{1}{c}{} & \multicolumn{1}{c}{} & \multicolumn{1}{c}{} & \multicolumn{1}{c|}{} & \multicolumn{1}{c}{} & \multicolumn{1}{c}{} & \multicolumn{1}{c}{} & \multicolumn{1}{c|}{} & \multicolumn{1}{c}{}      \\
        \midrule
        \multicolumn{1}{c|}{Ref}                     & -6.36                                          & -6.46                                        & -6.71                                         & -6.49                                 & -7.45                                & -7.26                                                                                & - & - & 0.47 & 0.74 &                       75&      65.3           &    97            &      -            & -         \\
        \midrule
        AR                                           & -5.75                                          & -5.64                                        & -6.18                                         & -5.88                                 & -6.75                                & -6.62                                                                                & 41.2 & 4.6 & 0.50 & 0.63          & 42.6                  & 279.1                 & 93.14                 & 1.86                  & 0.70      \\
        Pocket2Mol                                   & -5.14                                          & -4.70                                        & -6.42                                         & -5.82                                 & -7.15                                & -6.79                                                                                & 48.0 & \uline{16.5} & 0.57 & \uline{0.75} & 57.3                  & 130.6               & 81.03                 & 1.82                  & 0.71 \\
        TargetDiff                                   & -5.47                                          & -6.30                                        & -6.64                                         & -6.83                                 & -7.80                                & -7.91                                                                                & 57.6 & 5.7 & 0.48 & 0.58          & \uline{78.4}        & 306.0                 & 86.41                 & 1.49                  & 0.71 \\
        DecompDiff                                   & -5.67                                          & -6.04                                        & -7.04                                         & -7.09                                 & -8.39                                & -8.43                                                                                & 63.8 & 6.8 & 0.43 & 0.60          & 71.5                  & 318.4                 & 78.00                 & 2.90                  & 0.68      \\
        IPDiff                                       & -6.42                                          & -7.01                                        & -7.45                                         & -7.48                                 & -8.57                                & -8.51                                                                                & 68.2 & 10.0 & 0.53 & 0.59          & 52.0                  & 1248.7                & 91.27                 & 1.48                  & 0.73 \\
        TAGMol                                       & -7.02                                         & -7.77                                        & -7.95                                         & -8.07                                 & -8.59                                & -8.69                                                                                & 68.6 & 7.1 & 0.56 & 0.56          & 58.6                  & 457.4                 & 93.21                 & 1.54                  & 0.70      \\
        ALiDiff                                      & -7.07                                          & \uline{-7.95}                                & \uline{-8.09}                                 & -8.17                         & \uline{-8.90}                        & -8.81                                                                        & 68.5 & 7.0 & 0.50 & 0.56          & 54.7                  & 1239.4                & 90.85                 & \uline{1.47}          & 0.71 \\
        SeFMol                                      & -7.23                                          & -7.70                                        & -8.03                                         & -8.00                                 & -8.72                                & -8.75                                                                                & 68.7 & 11.2 & \textbf{0.64} & 0.60          & 65.2                  & 888.6                 & 93.90                 & 1.51                  & 0.68      \\
        PocketXMol                                   & -5.87                                          & -6.14                                        & -6.85                                         & -6.79                                 & -7.57                                & -7.59                                                                                & -    & -    & 0.51 & \textbf{0.80} & 67.0                  & \textbf{80.0}        & 93.46                 & 1.79                  & \textbf{0.76} \\
        BindDM                                       & -                                              & -                                            & -7.29                                         & -7.34                                 & -8.41                                & -8.37                                                                                & 64.2 & 6.9  & 0.52 & 0.58          & 62.0                  & 917.5                 & 86.60                 & 1.48                  & \uline{0.74} \\

        PAFlow                                       & \textbf{-8.31}                                 & \textbf{-8.92}                               & \textbf{-8.79}                                & \textbf{-8.96}                        & \textbf{-9.46}                       & \textbf{-9.49}                                                                       & \textbf{80.8} & 8.9 & 0.50 & 0.57          & 47.4                  & 1834.6                & \textbf{96.46}       & 1.55                  & 0.70      \\
        Ours                                         & \uline{-7.44}                                  & -7.89                                        & -7.93                                         & \uline{-8.19}                                 & -8.72                                            & \uline{-8.89}                                            & \uline{73.9} & \textbf{26.4} & \textbf{0.64} & 0.71          & \textbf{80.6}        & \uline{127.9}        & \uline{94.02}        & \textbf{1.26}        & 0.71 \\
        \hline
      \end{tabular}
      }
  \end{center}
  
\end{table}
We quantify the primary geometry comparison with a 10,000-replicate paired cluster bootstrap over complete pockets, using identical resampled pocket IDs for each method pair.
Relative to TargetDiff, PocketVE maintains comparable Valid$_{3\text{D}}$ ($+2.25$ percentage points; 95\% CI $[-0.75,5.13]$) while substantially improving geometric stability.
The corresponding strain reduction, defined as TargetDiff minus PocketVE, is $178.10$ (95\% CI $[150.38,203.44]$), making lower strain the clearest source-aligned gain.
Against guided TAGMol, the paired difference in mean Vina Score is $0.43$ in favor of PocketVE, but its 95\% CI $[-0.02,1.07]$ includes zero, so we do not claim a strict docking improvement over the architectural parent.

Figure~\ref{fig:pareto_binding_geometry} makes the binding--geometry trade-off more explicit.
Several strong-affinity baselines achieve low Vina scores but suffer from reduced Valid$_{3\text{D}}$ or high strain. In contrast, moderate CFG in PocketVE improves predicted binding while staying close to the high-validity, low-strain regime.
PAFlow illustrates this trade-off most clearly: it ranks first on the Vina-based metrics, but its much lower Valid$_{3\text{D}}$ and substantially higher strain energy suggest that the predicted affinity gain comes with a large geometric cost. The corresponding QED--validity and SA--validity trade-off plots are provided in Appendix~\ref{appendix:property_validity_tradeoff}.

\begin{figure}[tbp]

\centering
\begin{minipage}[t]{0.5\textwidth}
\centering
\textbf{Valid$_{3\text{D}}$}\par
\includegraphics[width=\linewidth]{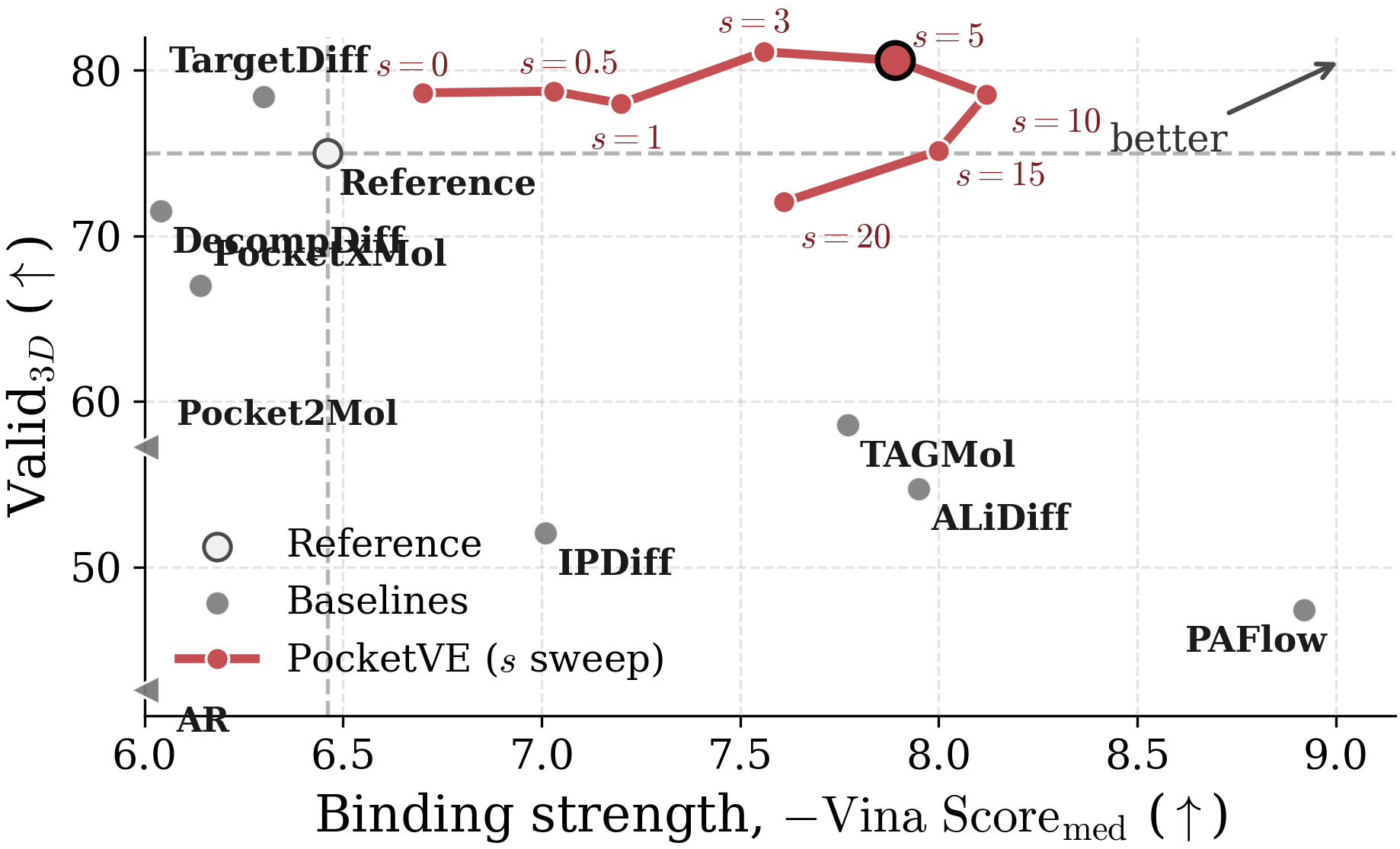}
\end{minipage}\hfill
\begin{minipage}[t]{0.5\textwidth}
\centering
\textbf{Strain Energy}\par
\includegraphics[width=\linewidth]{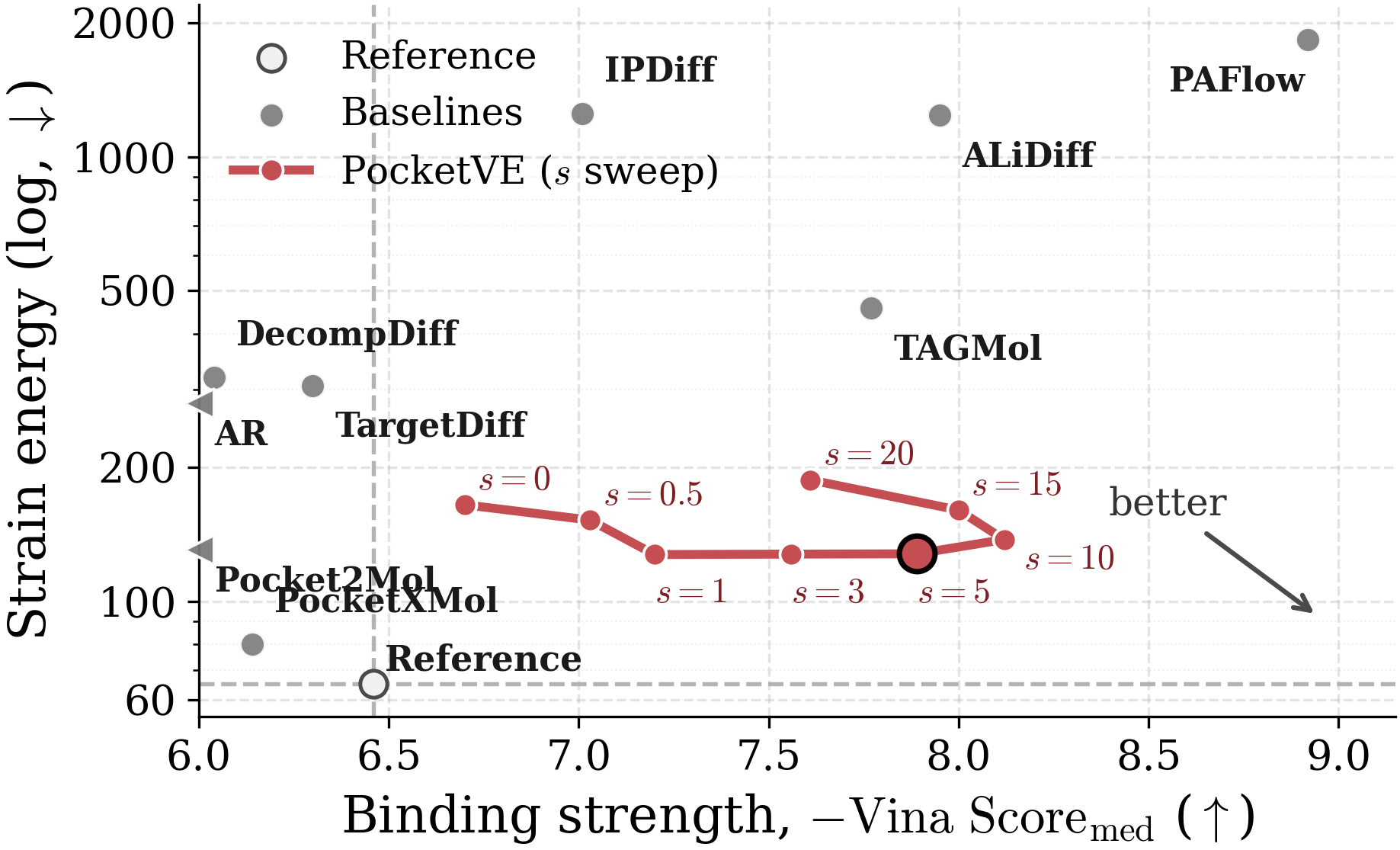}
\end{minipage}

\captionsetup{skip=4pt}
\caption{Binding--geometry trade-offs across baselines and PocketVE guidance scales.
Both panels use $-\mathrm{Vina\ Score}_{\mathrm{med}}$ on the $x$-axis, so larger values indicate stronger predicted binding.
Left: Valid$_{3\text{D}}$, where higher is better.
Right: strain energy on a log scale, where lower is better.
AR and Pocket2Mol lie outside the plotted $x$-axis range because their median Vina scores are substantially weaker ($-5.64$ and $-4.70$, respectively), and are therefore shown as off-scale markers at the left boundary.
PocketVE traces a steerable operating curve that preserves strong geometric quality under moderate CFG scales.}
\label{fig:pareto_binding_geometry}
  
\end{figure}

Beyond the aggregate GenBench3D metrics, Figure~\ref{fig:jsd_fragment_main} examines whether the geometric improvement is also visible at the distribution and fragment levels.
The JSD diagnostics show that unguided PocketVE most closely matches the reference ligand distance and atom-type distributions, while increasing the CFG scale progressively moves samples away from the reference distribution.
The rigid-fragment MMFF analysis gives a local check: compared with TargetDiff, Pocket2Mol, PAFlow, and TAGMol, PocketVE yields lower post-relaxation RMSD across most fragment sizes, especially for medium-to-large fragments, indicating more stable local geometry.
\begin{figure}[tbp]
\centering

\begin{minipage}[t]{0.54\textwidth}
\centering
\textbf{JSD Diagnostics}\par
\scriptsize
\setlength{\tabcolsep}{3.2pt}
\begin{tabular}{l|cccc}
\toprule
Method & Local $\downarrow$ & 12\AA $\downarrow$ & CC-2\AA $\downarrow$ & Atom $\downarrow$ \\
\midrule
TAGMol & 0.268 & 0.050 & 0.218 & 0.083 \\
TargetDiff & 0.234 & 0.056 & 0.213 & 0.059 \\
Pocket2Mol & 0.384 & 0.118 & 0.355 & 0.092 \\
ALiDiff & 0.341 & 0.097 & 0.350 & 0.202 \\
IPDiff & 0.391 & 0.098 & 0.349 & 0.194 \\
PAFlow & 0.454 & 0.141 & 0.461 & 0.228 \\
\midrule
Ours ($s=0$) & \textbf{0.191} & \textbf{0.035} & \textbf{0.120} & \textbf{0.047} \\
Ours ($s=5$) & 0.255 & 0.065 & 0.269 & 0.120 \\
Ours ($s=10$) & 0.284 & 0.067 & 0.292 & 0.136 \\
Ours ($s=20$) & 0.326 & 0.068 & 0.306 & 0.124 \\
\bottomrule
\end{tabular}
\end{minipage}\hfill
\begin{minipage}[t]{0.46\textwidth}
\centering
\textbf{Rigid-Fragment MMFF RMSD}\par
\includegraphics[width=\linewidth]{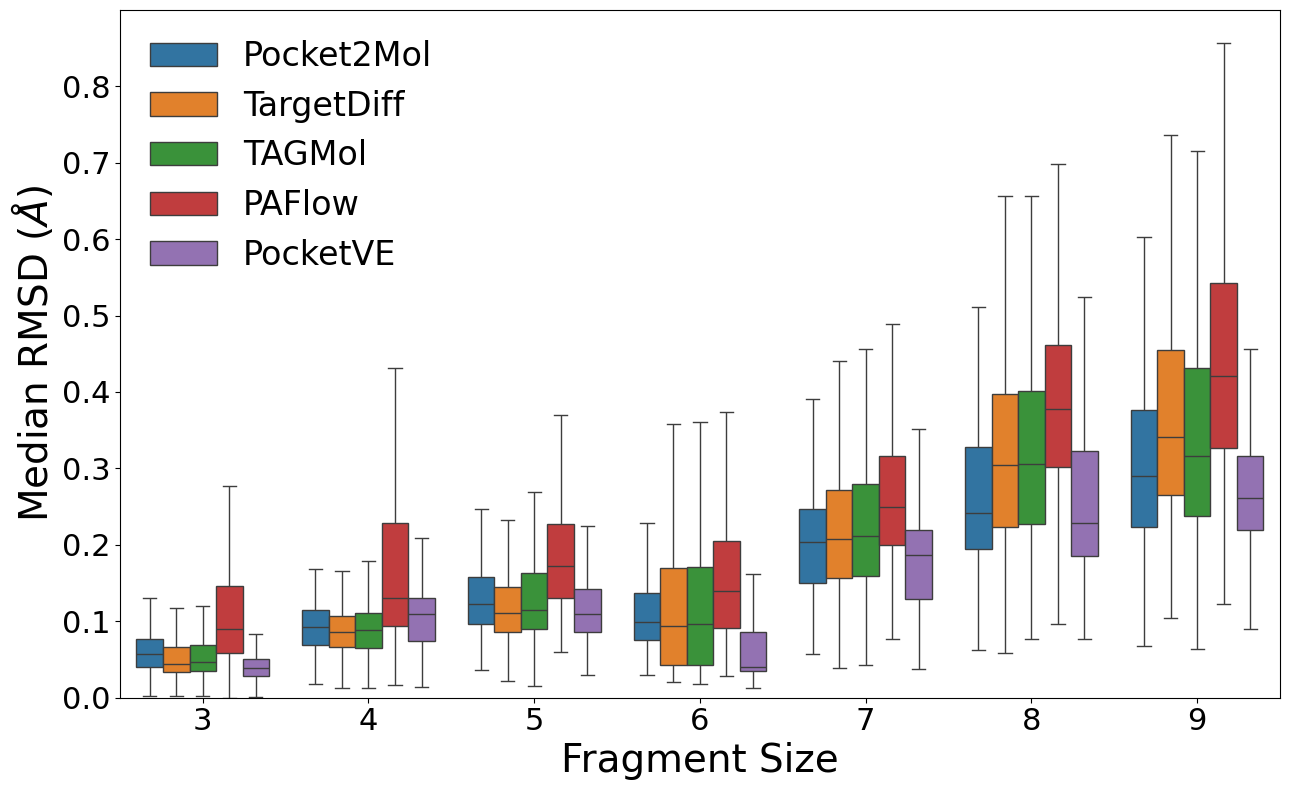}
\end{minipage}

\captionsetup{skip=4pt}
\caption{Additional geometric-fidelity analyses.
Left: JSD-based distribution diagnostics adapted from TargetDiff~\cite{guan3d}, where generated molecules are compared against reference-ligand distributions from the test set.
Lower values indicate better distribution fidelity.
Right: Median RMSD between pre- and post-MMFF relaxed rigid fragments across fragment sizes for representative baselines.}
\label{fig:jsd_fragment_main}

\end{figure}

Figure~\ref{fig:case_study_summary} provides a representative visual comparison of the same trend.
In this example, PocketVE places the ligand closer to the reference pose while maintaining a compact structure inside the pocket.
Several baseline samples show larger pose deviations or less consistent pocket occupancy.

\begin{figure}[tb]
\centering
\includegraphics[width=\textwidth]{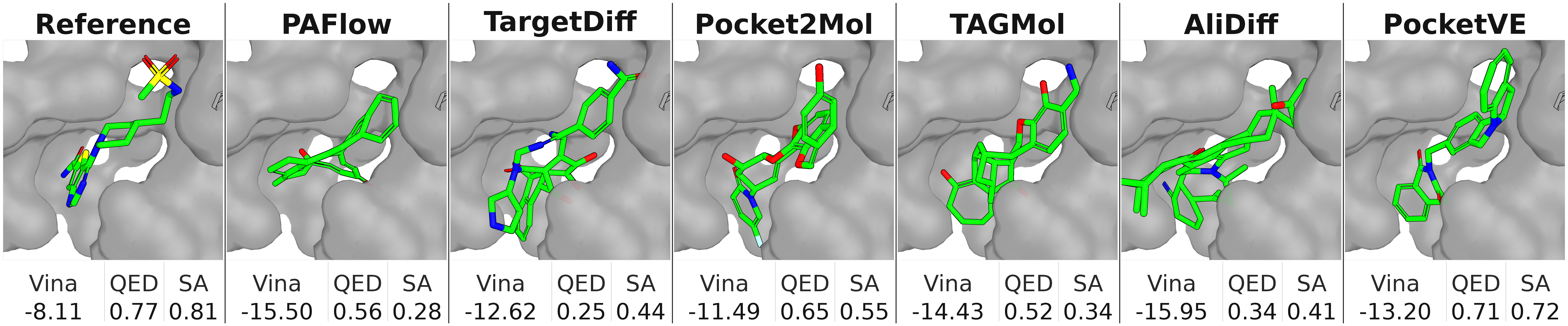}

\captionsetup{skip=4pt}
\caption{Visualizations of the reference ligand and generated ligands from baselines and PocketVE for a representative protein pocket (PDB ID: 5D7N). Vina Dock, QED, and SA are reported below.}
\label{fig:case_study_summary}

\end{figure}
In summary, PocketVE improves geometric quality while retaining competitive docking and molecular-property scores rather than uniformly dominating every metric (Table~\ref{tab1}).
Figures~\ref{fig:pareto_binding_geometry},~\ref{fig:jsd_fragment_main}, and~\ref{fig:case_study_summary} provide aggregate, distributional, fragment-level, and qualitative views of this trade-off.
Moderate CFG preserves geometric stability while shifting target-related properties, whereas stronger guidance can increase distribution drift and reduce parts of the diversity profile.
Appendix~\ref{appendix:pocket_diagnostics} reports pocket-permutation and PoseCheck diagnostics, and Appendices~\ref{appendix:run_variability} and~\ref{appendix:reference_hits} provide repeated-run and reference-normalized analyses.

\subsection{Ablations and Analysis}
\label{subsec:ablations_analysis}
We ablate the cumulative effect of the main components (Table~\ref{tab:ablation_runtime}), followed by two control axes that preserve the base architecture: the inference-time CFG scale and the training-time protein perturbation strategy.

\paragraph{Cumulative ablation}
\begin{table}[tbp]

\centering

\captionsetup{skip=4pt}
\caption{Compact ablation and efficiency summary for PocketVE.
The left table reports a cumulative ablation from TAGMol to the final PocketVE configuration: the PocketVE backbone row uses $s=0$, no protein perturbation, the log-uniform schedule, and the default sampling-time noise injection, and each subsequent row adds one component on top of the previous row.
Runtimes are measured in seconds to sample 100 ligands per protein over 100 test proteins on a single NVIDIA A100 GPU with batch size 10 and no mixed precision, excluding docking, MMFF relaxation, and other post-processing.}

\begin{minipage}[t]{0.58\textwidth}
\centering
\textbf{Cumulative Ablation}\par
\scriptsize
\setlength{\tabcolsep}{3.0pt}
\begin{tabular}{l|cccc}
\toprule
Setting & Vina Med. $\downarrow$ & Valid$_{3\text{D}}$ $\uparrow$ & Strain $\downarrow$ & Clash-Free $\uparrow$ \\
\midrule
TAGMol (unguided) & -6.08 & 65.56 & 401.3 & 78.13 \\
PocketVE backbone & -6.35 & 69.62 & 274.4 & 88.59 \\
+ gen-arcsin scheduler & -6.76 & 77.75 & 179.1 & 91.87 \\
+ CFG scale $s=1$ & -7.41 & 78.55 & 137.8 & 92.70 \\
+ CFG scale $s=5$ & -7.78 & 77.33 & \textbf{125.9} & \textbf{94.07} \\
+ protein perturb. & \textbf{-7.89} & \textbf{80.60} & 127.9 & 94.02 \\
\bottomrule
\end{tabular}
\end{minipage}\hfill
\begin{minipage}[t]{0.40\textwidth}
\centering
\textbf{Sampling Time}\par\vspace{0.3em}
\scriptsize
\setlength{\tabcolsep}{4pt}
\begin{tabular}{lc}
\toprule
Method & Time $(\downarrow)$ \\
\midrule
PocketVE ($s=1$) & \textbf{127.8} \\
PocketVE ($s=5$) & 234.1 \\
PAFlow & 249.8 \\
TargetDiff & 1382.5 \\
Pocket2Mol & 1396.8 \\
TAGMol & 2433.3 \\
\bottomrule
\end{tabular}
\end{minipage}
\label{tab:ablation_runtime}

\end{table}

The left panel of Table~\ref{tab:ablation_runtime} shows that the PocketVE backbone and sampler yield a large geometric improvement over TAGMol, while moderate CFG mainly shifts affinity-related objectives.
This comparison reflects an integrated transition in parameterization, scaling, noise, and sampling, rather than an isolated test of VE alone.
Protein perturbation then improves the final geometry-oriented operating point, raising Valid$_{3\text{D}}$ while keeping other metrics nearly unchanged.
A component-wise summary that groups the TAGMol baseline, unguided PocketVE, CFG variants, protein-perturbation variants, and schedule ablation is provided in Appendix~\ref{appendix:component_ablation}.

As a targeted coordinate-scale check, we also removed the $\sigma_{\mathrm{data}}$ factor in Eq.~\eqref{eq:appendix_scaled_input}. In the corresponding diagnostic, mean Vina changes from $-7.589$ to $+92.237$, clash-free rate collapses from $94.23\%$ to $3.39\%$, and centroid distance increases from $1.25$ to $3.33$~\AA{}, while Valid$_{3\text{D}}$ remains $78.72\%$ versus $82.39\%$. We therefore retain Eq.~\eqref{eq:appendix_scaled_input} as part of the integrated pocket-aware formulation, without attributing the full gain to this factor alone.

\paragraph{Sampling efficiency}
The PocketVE ($s=5$) runtime on the right already includes the extra CFG passes and uses the same 100-step regime as PAFlow, VEDA~\cite{zhang2026veda}, and SemlaFlow~\cite{irwin2025semlaflow}.
Under the same evaluation protocol, PocketVE remains substantially faster than most diffusion and autoregressive baselines and is slightly faster than PAFlow while remaining in the same runtime regime.
The runtime comparison on the right should be read together with the sampling-step ablation in Appendix~\ref{appendix:sampling_steps}.
In our setting, the 100-step sampler is a deliberate operating point supported by that ablation, rather than a naive truncation of a longer trajectory.

\paragraph{Guidance scale sensitivity}
We sweep the sampling-time CFG scale $s$, where $s=0$ uses the null/unconditional branch and $s=1$ recovers the plain conditional model.
As shown in Figure~\ref{fig:pareto_binding_geometry}, increasing $s$ first improves target-related objectives while largely preserving geometric quality.
However, overly large guidance scales eventually reduce Valid$_{3\text{D}}$ and increase distribution drift.
In our experiments, $s=5$ is the representative operating point, whereas $s=10$ further favors affinity-related objectives at a geometric cost.
Detailed step-wise metrics and per-metric trend plots are provided in Appendix~\ref{appendix:cfg_scale_details}.

We also evaluate property distributions, non-default requests, and conflicting conditions in Appendix~\ref{appendix:property_steering}.
These analyses show directional property shifts under CFG together with trade-offs in geometry, distributional fidelity, and diversity at stronger guidance scales.

\paragraph{Protein perturbation strategy}
To evaluate the impact of training-time protein-side perturbation, we compared fixed and noise-level-dependent schedules against a no-perturbation baseline.
Protein perturbation acts as a regularizer that reshapes the property--geometry trade-off rather than uniformly dominating the no-perturbation baseline.
Among the evaluated settings, the bounded noise-level-dependent schedule $\sigma_p=\min(0.1\sigma,0.5)$ gives the most favorable trade-off across local JSD, atom-type JSD, Vina metrics, and centroid distance.
Lighter perturbation further reduces strain, while stronger schedules improve broader distributional statistics such as JSD-All-12\AA{}, JSD-CC-2\AA{}, and SA at some cost in target-specific precision.
Detailed JSD diagnostics, full GenBench3D metrics, and an additional test-time pocket perturbation study are provided in Appendix Tables~\ref{tab:protein_perturbation} and~\ref{tab:appendix_test_time_perturbation}.

\section{Limitations}\label{sec:limitations}
Our method has several limitations that affect its practical deployment.
First, classifier-free guidance increases inference cost because each guided sampling step requires both a conditional and an unconditional forward pass.
This doubles the number of function evaluations compared with plain conditional sampling.
Second, the preferred guidance scale may be target-dependent, and stronger guidance can trade geometric and distributional fidelity for target-related objectives.
Third, the Gaussian pocket perturbation used here is a simple training regularizer rather than a realistic model of protein conformational flexibility.
Finally, although PocketVE improves geometry on the evaluated benchmark, it relies on docking-based affinity proxies and benchmark-specific protocols.
All principal results use the filtered 100-pocket CrossDocked2020 split, so broader generalization and experimental validation remain future work.
The TAGMol-to-PocketVE transition also combines several backbone and sampler choices, and the current experiments do not establish VE-only causal attribution.
Additionally, cross-method comparison is confounded by differences in generated atom count (molecular size), and all results use first-order Euler sampling, which is not claimed to be the optimal solver.
We use $\sigma_{\mathrm{data}}=10$ as a fixed operating choice; the sensitivity sweep in Appendix~\ref{appendix:coordinate_scale_diagnostics} shows that performance remains robust across the evaluated range.
Vina-based affinity proxies may reward over-optimization of binding geometry rather than true binding affinity, and the guidance scale $s=5$ was selected descriptively from the test-benchmark sweep rather than from an independent validation set.

\section{Conclusion}
\label{sec:conclusion}
In this work, we introduced PocketVE, a target-aware 3D molecular generation framework that improves geometric stability while supporting inference-time conditional control.
By revisiting coordinate parameterization, guidance design, and protein-side perturbation, PocketVE achieves higher structural quality in unguided generation and a more favorable trade-off between geometry and target objectives under guidance.
Experiments on CrossDocked2020 show improved geometric stability relative to most baselines, together with directional, aggregate multi-property steering.
Future work includes extending evaluation to broader target families, introducing matched component controls, and strengthening the connection between computational diagnostics and experimental validation.

\clearpage
\bibliographystyle{plain}
\bibliography{ref}

@article{ho2020denoising,
  title   = {Denoising diffusion probabilistic models},
  author  = {Ho, Jonathan and Jain, Ajay and Abbeel, Pieter},
  journal = {Advances in neural information processing systems},
  volume  = {33},
  pages   = {6840--6851},
  year    = {2020}
}

@inproceedings{songscore,
  title     = {Score-Based Generative Modeling through Stochastic Differential Equations},
  author    = {Song, Yang and Sohl-Dickstein, Jascha and Kingma, Diederik P and Kumar, Abhishek and Ermon, Stefano and Poole, Ben},
  booktitle = {International Conference on Learning Representations}
}

@article{karras2022elucidating,
  title   = {Elucidating the design space of diffusion-based generative models},
  author  = {Karras, Tero and Aittala, Miika and Aila, Timo and Laine, Samuli},
  journal = {Advances in neural information processing systems},
  volume  = {35},
  pages   = {26565--26577},
  year    = {2022}
}

@inproceedings{campbell2024generative,
  title     = {Generative flows on discrete state-spaces: enabling multimodal flows with applications to protein co-design},
  author    = {Campbell, Andrew and Yim, Jason and Barzilay, Regina and Rainforth, Tom and Jaakkola, Tommi},
  booktitle = {Proceedings of the 41st International Conference on Machine Learning},
  pages     = {5453--5512},
  year      = {2024}
}

@article{dhariwal2021diffusion,
  title   = {Diffusion models beat gans on image synthesis},
  author  = {Dhariwal, Prafulla and Nichol, Alexander},
  journal = {Advances in neural information processing systems},
  volume  = {34},
  pages   = {8780--8794},
  year    = {2021}
}

@article{ho2022classifier,
  title   = {Classifier-free diffusion guidance},
  author  = {Ho, Jonathan and Salimans, Tim},
  journal = {arXiv preprint arXiv:2207.12598},
  year    = {2022}
}

@inproceedings{hoogeboom2022equivariant,
  title        = {Equivariant diffusion for molecule generation in 3d},
  author       = {Hoogeboom, Emiel and Satorras, V{\i}ctor Garcia and Vignac, Cl{\'e}ment and Welling, Max},
  booktitle    = {International conference on machine learning},
  pages        = {8867--8887},
  year         = {2022},
  organization = {PMLR}
}

@inproceedings{guan3d,
  title     = {3D Equivariant Diffusion for Target-Aware Molecule Generation and Affinity Prediction},
  author    = {Guan, Jiaqi and Qian, Wesley Wei and Peng, Xingang and Su, Yufeng and Peng, Jian and Ma, Jianzhu},
  booktitle = {The Eleventh International Conference on Learning Representations}
}

@inproceedings{dorna24tagmol,
  title     = {TAGMol: Target-Aware Gradient-guided Molecule Generation},
  author    = {Dorna, Vineeth and Subhalingam, D and Kolluru, Keshav and Tuli, Shreshth and Singh, Mrityunjay and Singal, Saurabh and Krishnan, NM Anoop and Ranu, Sayan},
  booktitle = {ICML'24 Workshop ML for Life and Material Science: From Theory to Industry Applications}
}

@inproceedings{zhang2026veda,
  title     = {VEDA: Generation of 3D Molecules via Variance-Exploding Diffusion with Annealing},
  author    = {Zhang, Peining and Bi, Jinbo and Song, Minghu},
  booktitle = {Proceedings of the AAAI Conference on Artificial Intelligence},
  volume    = {40},
  number    = {33},
  pages     = {28346--28354},
  year      = {2026}
}

@article{gu2024aligning,
  title   = {Aligning target-aware molecule diffusion models with exact energy optimization},
  author  = {Gu, Siyi and Xu, Minkai and Powers, Alexander and Nie, Weili and Geffner, Tomas and Kreis, Karsten and Leskovec, Jure and Vahdat, Arash and Ermon, Stefano},
  journal = {Advances in Neural Information Processing Systems},
  volume  = {37},
  pages   = {44040--44063},
  year    = {2024}
}

@article{isert2023structure,
  title     = {Structure-based drug design with geometric deep learning},
  author    = {Isert, Clemens and Atz, Kenneth and Schneider, Gisbert},
  journal   = {Current Opinion in Structural Biology},
  volume    = {79},
  pages     = {102548},
  year      = {2023},
  publisher = {Elsevier}
}

@article{zhang2025unraveling,
  title     = {Unraveling the potential of diffusion models in small-molecule generation},
  author    = {Zhang, Peining and Baker, Daniel and Song, Minghu and Bi, Jinbo},
  journal   = {Drug Discovery Today},
  volume    = {30},
  number    = {7},
  pages     = {104413},
  year      = {2025},
  publisher = {Elsevier}
}

@article{gomez2018automatic,
  title     = {Automatic chemical design using a data-driven continuous representation of molecules},
  author    = {G{\'o}mez-Bombarelli, Rafael and Wei, Jennifer N and Duvenaud, David and Hern{\'a}ndez-Lobato, Jos{\'e} Miguel and S{\'a}nchez-Lengeling, Benjam{\'\i}n and Sheberla, Dennis and Aguilera-Iparraguirre, Jorge and Hirzel, Timothy D and Adams, Ryan P and Aspuru-Guzik, Al{\'a}n},
  journal   = {ACS central science},
  volume    = {4},
  number    = {2},
  pages     = {268--276},
  year      = {2018},
  publisher = {ACS Publications}
}

@article{loeffler2024reinvent,
  title     = {Reinvent 4: Modern AI--driven generative molecule design},
  author    = {Loeffler, Hannes H and He, Jiazhen and Tibo, Alessandro and Janet, Jon Paul and Voronov, Alexey and Mervin, Lewis H and Engkvist, Ola},
  journal   = {Journal of Cheminformatics},
  volume    = {16},
  number    = {1},
  pages     = {20},
  year      = {2024},
  publisher = {Springer}
}

@inproceedings{jin2018junction,
  title        = {Junction tree variational autoencoder for molecular graph generation},
  author       = {Jin, Wengong and Barzilay, Regina and Jaakkola, Tommi},
  booktitle    = {International conference on machine learning},
  pages        = {2323--2332},
  year         = {2018},
  organization = {PMLR}
}

@inproceedings{shigraphaf,
  title     = {GraphAF: a Flow-based Autoregressive Model for Molecular Graph Generation},
  author    = {Shi, Chence and Xu, Minkai and Zhu, Zhaocheng and Zhang, Weinan and Zhang, Ming and Tang, Jian},
  booktitle = {International Conference on Learning Representations}
}

@inproceedings{peng2022pocket2mol,
  title        = {Pocket2mol: Efficient molecular sampling based on 3d protein pockets},
  author       = {Peng, Xingang and Luo, Shitong and Guan, Jiaqi and Xie, Qi and Peng, Jian and Ma, Jianzhu},
  booktitle    = {International conference on machine learning},
  pages        = {17644--17655},
  year         = {2022},
  organization = {PMLR}
}

@article{luo20213d,
  title   = {A 3D generative model for structure-based drug design},
  author  = {Luo, Shitong and Guan, Jiaqi and Ma, Jianzhu and Peng, Jian},
  journal = {Advances in Neural Information Processing Systems},
  volume  = {34},
  pages   = {6229--6239},
  year    = {2021}
}

@article{baillif2024benchmarking,
  title   = {Benchmarking structure-based three-dimensional molecular generative models using GenBench3D: ligand conformation quality matters},
  author  = {Baillif, Benoit and Cole, Jason and McCabe, Patrick and Bender, Andreas},
  journal = {arXiv preprint arXiv:2407.04424},
  year    = {2024}
}

@article{harris2023benchmarking,
  title   = {Benchmarking generated poses: How rational is structure-based drug design with generative models?},
  author  = {Harris, Charles and Didi, Kieran and Jamasb, Arian R and Joshi, Chaitanya K and Mathis, Simon V and Lio, Pietro and Blundell, Tom},
  journal = {arXiv preprint arXiv:2308.07413},
  year    = {2023}
}

@article{francoeur2020three,
  title     = {Three-dimensional convolutional neural networks and a cross-docked data set for structure-based drug design},
  author    = {Francoeur, Paul G and Masuda, Tomohide and Sunseri, Jocelyn and Jia, Andrew and Iovanisci, Richard B and Snyder, Ian and Koes, David R},
  journal   = {Journal of chemical information and modeling},
  volume    = {60},
  number    = {9},
  pages     = {4200--4215},
  year      = {2020},
  publisher = {ACS Publications}
}

@inproceedings{huang2024protein,
  title     = {Protein-ligand interaction prior for binding-aware 3d molecule diffusion models},
  author    = {Huang, Zhilin and Yang, Ling and Zhou, Xiangxin and Zhang, Zhilong and Zhang, Wentao and Zheng, Xiawu and Chen, Jie and Wang, Yu and Cui, Bin and Yang, Wenming},
  booktitle = {The Twelfth International Conference on Learning Representations},
  year      = {2024}
}

@inproceedings{zhang2024tackling,
  title     = {Tackling the singularities at the endpoints of time intervals in diffusion models},
  author    = {Zhang, Pengze and Yin, Hubery and Li, Chen and Xie, Xiaohua},
  booktitle = {Proceedings of the IEEE/CVF Conference on Computer Vision and Pattern Recognition},
  pages     = {6945--6954},
  year      = {2024}
}

@inproceedings{lou2023reflected,
  title        = {Reflected diffusion models},
  author       = {Lou, Aaron and Ermon, Stefano},
  booktitle    = {International Conference on Machine Learning},
  pages        = {22675--22701},
  year         = {2023},
  organization = {PMLR}
}

@inproceedings{rombach2022high,
  title     = {High-resolution image synthesis with latent diffusion models},
  author    = {Rombach, Robin and Blattmann, Andreas and Lorenz, Dominik and Esser, Patrick and Ommer, Bj{\"o}rn},
  booktitle = {Proceedings of the IEEE/CVF conference on computer vision and pattern recognition},
  pages     = {10684--10695},
  year      = {2022}
}

@inproceedings{guan2023decompdiff,
  title     = {DECOMPDIFF: diffusion models with decomposed priors for structure-based drug design},
  author    = {Guan, Jiaqi and Zhou, Xiangxin and Yang, Yuwei and Bao, Yu and Peng, Jian and Ma, Jianzhu and Liu, Qiang and Wang, Liang and Gu, Quanquan},
  booktitle = {Proceedings of the 40th International Conference on Machine Learning},
  pages     = {11827--11846},
  year      = {2023}
}

@article{zhou2025prior,
  title   = {Prior-guided flow matching for target-aware molecule design with learnable atom number},
  author  = {Zhou, Jingyuan and Qian, Hao and Tu, Shikui and Xu, Lei},
  journal = {arXiv preprint arXiv:2509.01486},
  year    = {2025}
}

@article{ragoza2022generating,
  title     = {Generating 3D molecules conditional on receptor binding sites with deep generative models},
  author    = {Ragoza, Matthew and Masuda, Tomohide and Koes, David Ryan},
  journal   = {Chemical science},
  volume    = {13},
  number    = {9},
  pages     = {2701--2713},
  year      = {2022},
  publisher = {Royal Society of Chemistry}
}

@article{bickerton2012quantifying,
  title     = {Quantifying the chemical beauty of drugs},
  author    = {Bickerton, G Richard and Paolini, Gaia V and Besnard, J{\'e}r{\'e}my and Muresan, Sorel and Hopkins, Andrew L},
  journal   = {Nature chemistry},
  volume    = {4},
  number    = {2},
  pages     = {90--98},
  year      = {2012},
  publisher = {Nature Publishing Group UK London}
}

@article{ertl2009estimation,
  title     = {Estimation of synthetic accessibility score of drug-like molecules based on molecular complexity and fragment contributions},
  author    = {Ertl, Peter and Schuffenhauer, Ansgar},
  journal   = {Journal of cheminformatics},
  volume    = {1},
  number    = {1},
  pages     = {8},
  year      = {2009},
  publisher = {Springer}
}

@article{eberhardt2021autodock,
  title     = {AutoDock Vina 1.2. 0: new docking methods, expanded force field, and python bindings},
  author    = {Eberhardt, Jerome and Santos-Martins, Diogo and Tillack, Andreas F and Forli, Stefano},
  journal   = {Journal of chemical information and modeling},
  volume    = {61},
  number    = {8},
  pages     = {3891--3898},
  year      = {2021},
  publisher = {ACS Publications}
}

@article{zhou2025guiding,
  title   = {Guiding diffusion models with reinforcement learning for stable molecule generation},
  author  = {Zhou, Zhijian and An, Junyi and Liu, Zongkai and Shi, Yunfei and Zhang, Xuan and Cao, Fenglei and Qu, Chao and Qi, Yuan},
  journal = {arXiv preprint arXiv:2508.16521},
  year    = {2025}
}

@article{peng2026unified,
  title     = {Unified modeling of 3D molecular generation via atomic interactions with PocketXMol},
  author    = {Peng, Xingang and Guo, Ruihan and Guo, Fenglin and Wang, Ziyi and Sun, Jiayu and Guan, Jiaqi and Jia, Yinjun and Xu, Yan and Huang, Yanwen and Zhang, Muhan and others},
  journal   = {Cell},
  year      = {2026},
  publisher = {Elsevier}
}

@article{schneuing2024structure,
  title     = {Structure-based drug design with equivariant diffusion models},
  author    = {Schneuing, Arne and Harris, Charles and Du, Yuanqi and Didi, Kieran and Jamasb, Arian and Igashov, Ilia and Du, Weitao and Gomes, Carla and Blundell, Tom L and Lio, Pietro and others},
  journal   = {Nature Computational Science},
  volume    = {4},
  number    = {12},
  pages     = {899--909},
  year      = {2024},
  publisher = {Nature Publishing Group US New York}
}

@misc{jordan2024muon,
  author = {Keller Jordan and Yuchen Jin and Vlado Boza and Jiacheng You and
            Franz Cesista and Laker Newhouse and Jeremy Bernstein},
  title  = {Muon: An optimizer for hidden layers in neural networks},
  year   = {2024},
  url    = {https://kellerjordan.github.io/posts/muon/}
}

@inproceedings{loshchilovdecoupled,
  title     = {Decoupled Weight Decay Regularization},
  author    = {Loshchilov, Ilya and Hutter, Frank},
  booktitle = {International Conference on Learning Representations}
}

@article{jian2026general,
  title     = {General binding affinity guidance for diffusion models in structure-based drug design},
  author    = {Jian, Yue and Wu, Curtis and Reidenbach, Danny and Krishnapriyan, Aditi S},
  journal   = {Journal of Chemical Information and Modeling},
  year      = {2026},
  publisher = {ACS Publications}
}

@inproceedings{xu2023geometric,
  title        = {Geometric latent diffusion models for 3d molecule generation},
  author       = {Xu, Minkai and Powers, Alexander S and Dror, Ron O and Ermon, Stefano and Leskovec, Jure},
  booktitle    = {International Conference on Machine Learning},
  pages        = {38592--38610},
  year         = {2023},
  organization = {PMLR}
}

@inproceedings{vignac2023midi,
  title        = {Midi: Mixed graph and 3d denoising diffusion for molecule generation},
  author       = {Vignac, Clement and Osman, Nagham and Toni, Laura and Frossard, Pascal},
  booktitle    = {Joint European Conference on Machine Learning and Knowledge Discovery in Databases},
  pages        = {560--576},
  year         = {2023},
  organization = {Springer}
}

@inproceedings{irwin2025semlaflow,
  title        = {SemlaFlow--Efficient 3D Molecular Generation with Latent Attention and Equivariant Flow Matching},
  author       = {Irwin, Ross and Tibo, Alessandro and Janet, Jon Paul and Olsson, Simon},
  booktitle    = {International Conference on Artificial Intelligence and Statistics},
  pages        = {3772--3780},
  year         = {2025},
  organization = {PMLR}
}

@article{zhang2026steering,
  title     = {Steering semi-flexible molecular diffusion model for structure-based drug design with reinforcement learning},
  author    = {Zhang, Xudong and Qu, Sanqing and Lu, Fan and Wang, Jianmin and Tian, Zhixin and Gu, Shangding and Zhang, Yanping and Knoll, Alois and Gao, Shaorong and Chen, Guang and others},
  journal   = {Science Advances},
  volume    = {12},
  number    = {16},
  pages     = {eady9955},
  year      = {2026},
  publisher = {American Association for the Advancement of Science}
}

@article{bishop1995training,
  title     = {Training with noise is equivalent to Tikhonov regularization},
  author    = {Bishop, Chris M},
  journal   = {Neural computation},
  volume    = {7},
  number    = {1},
  pages     = {108--116},
  year      = {1995},
  publisher = {MIT Press}
}

@article{chapelle2000vicinal,
  title   = {Vicinal risk minimization},
  author  = {Chapelle, Olivier and Weston, Jason and Bottou, L{\'e}on and Vapnik, Vladimir},
  journal = {Advances in neural information processing systems},
  volume  = {13},
  year    = {2000}
}

@inproceedings{tishby2015deep,
  title        = {Deep learning and the information bottleneck principle},
  author       = {Tishby, Naftali and Zaslavsky, Noga},
  booktitle    = {2015 ieee information theory workshop (itw)},
  pages        = {1--5},
  year         = {2015},
  organization = {Ieee}
}

@article{nikitin2025geom,
  title     = {GEOM-drugs revisited: toward more chemically accurate benchmarks for 3D molecule generation},
  author    = {Nikitin, Filipp and Dunn, Ian and Koes, David Ryan and Isayev, Olexandr},
  journal   = {Digital Discovery},
  volume    = {4},
  number    = {11},
  pages     = {3282--3291},
  year      = {2025},
  publisher = {Royal Society of Chemistry}
}

@inproceedings{rojas2026improving,
  title     = {Improving Classifier-Free Guidance in Masked Diffusion: Low-Dim Theoretical Insights with High-Dim Impact},
  author    = {Rojas, Kevin and He, Ye and Lai, Chieh-Hsin and Takida, Yuhta and Mitsufuji, Yuki and Tao, Molei},
  booktitle = {The Fourteenth International Conference on Learning Representations},
  year      = {2026}
}

@article{groom2016cambridge,
  title     = {The Cambridge structural database},
  author    = {Groom, Colin R and Bruno, Ian J and Lightfoot, Matthew P and Ward, Suzanna C},
  journal   = {Structural Science},
  volume    = {72},
  number    = {2},
  pages     = {171--179},
  year      = {2016},
  publisher = {International Union of Crystallography}
}

@article{tong2021large,
  title     = {Large-scale analysis of bioactive ligand conformational strain energy by ab initio calculation},
  author    = {Tong, Jiahui and Zhao, Suwen},
  journal   = {Journal of Chemical Information and Modeling},
  volume    = {61},
  number    = {3},
  pages     = {1180--1192},
  year      = {2021},
  publisher = {ACS Publications}
}

@inproceedings{huang2024binding,
  title={Binding-adaptive diffusion models for structure-based drug design},
  author={Huang, Zhilin and Yang, Ling and Zhang, Zaixi and Zhou, Xiangxin and Bao, Yu and Zheng, Xiawu and Yang, Yuwei and Wang, Yu and Yang, Wenming},
  booktitle={Proceedings of the AAAI Conference on Artificial Intelligence},
  volume={38},
  number={11},
  pages={12671--12679},
  year={2024}
}
\newpage
\appendix


\section{Classifier-Free Guidance Details}
\label{appendix:cfg_details}

For completeness, we record the standard score-level form of classifier-free guidance used to motivate the implementation in the main text.

\subsection{Noise-Level Dependence of Explicit Classifier Guidance}
\label{appendix:cfg_noise_dependence}

Let $\mathcal{L}_\sigma=(\mathbf{x}_\sigma,\mathbf{z}_\sigma)$ denote the noisy ligand state at noise level $\sigma$.
The guided target distribution can be written as
\begin{equation}\label{eq:cfg_target}
  \log p_w(\mathcal{L}_\sigma \mid \mathcal{P},\mathbf{a})
  =
  (1+w)\log p(\mathcal{L}_\sigma \mid \mathcal{P},\mathbf{a})
  -w\log p(\mathcal{L}_\sigma \mid \mathcal{P},\varnothing)
  +\mathrm{const}.
\end{equation}
For the continuous coordinate branch, taking the score with respect to $\mathbf{x}_\sigma$ yields
\begin{equation}\label{eq:cfg_score}
  \nabla_{\mathbf{x}_\sigma} \log p_w(\mathcal{L}_\sigma \mid \mathcal{P},\mathbf{a})
  =
  (1+w)\nabla_{\mathbf{x}_\sigma}\log p(\mathcal{L}_\sigma \mid \mathcal{P},\mathbf{a})
  -w\nabla_{\mathbf{x}_\sigma}\log p(\mathcal{L}_\sigma \mid \mathcal{P},\varnothing).
\end{equation}
At the score level, both approaches can be read through the conditional decomposition
\begin{equation}\label{eq:cfg_bayes}
\nabla_{\mathbf{x}_\sigma}\log p(\mathcal{L}_\sigma \mid \mathcal{P},\mathbf{a})
=
\nabla_{\mathbf{x}_\sigma}\log p(\mathcal{L}_\sigma \mid \mathcal{P},\varnothing)
+
\nabla_{\mathbf{x}_\sigma}\log p(\mathbf{a}\mid \mathbf{x}_\sigma,\mathbf{z}_\sigma,\mathcal{P}),
\end{equation}
which, when substituted into Eq.~\eqref{eq:cfg_score}, gives
\begin{equation}\label{eq:cfg_score_sub}
\begin{aligned}
\nabla_{\mathbf{x}_\sigma} \log p_w(\mathcal{L}_\sigma \mid \mathcal{P},\mathbf{a})
&=
(1+w)\Big(
\nabla_{\mathbf{x}_\sigma}\log p(\mathcal{L}_\sigma \mid \mathcal{P},\varnothing)
+
\nabla_{\mathbf{x}_\sigma}\log p(\mathbf{a}\mid \mathbf{x}_\sigma,\mathbf{z}_\sigma,\mathcal{P})
\Big) \\
&\quad
-w\nabla_{\mathbf{x}_\sigma}\log p(\mathcal{L}_\sigma \mid \mathcal{P},\varnothing) \\
&=
\nabla_{\mathbf{x}_\sigma}\log p(\mathcal{L}_\sigma \mid \mathcal{P},\varnothing)
+
(1+w)\nabla_{\mathbf{x}_\sigma}\log p(\mathbf{a}\mid \mathbf{x}_\sigma,\mathbf{z}_\sigma,\mathcal{P}).
\end{aligned}
\end{equation}
This makes explicit that extrapolating the conditional score away from the dropped-condition score amplifies the same property-dependent term used by external guidance methods.
Explicit classifier guidance follows the same decomposition, but replaces that attribute-dependent term with the gradient of an external classifier or property predictor.
More concretely, it keeps the base generative score
\(
s_{\mathrm{base}}(\mathbf{x}_\sigma)
\approx
\nabla_{\mathbf{x}_\sigma}\log p(\mathcal{L}_\sigma \mid \mathcal{P},\varnothing)
\)
and approximates the attribute-dependent term by
\(
\nabla_{\mathbf{x}_\sigma}\log p_\phi(\mathbf{a}\mid \mathbf{x}_\sigma,\mathbf{z}_\sigma,\mathcal{P})
\approx
\nabla_{\mathbf{x}_\sigma}\log p(\mathbf{a}\mid \mathbf{x}_\sigma,\mathbf{z}_\sigma,\mathcal{P})
\). Introducing the same scalar coefficient $w$ as an external guidance strength then gives
\begin{equation}\label{eq:classifier_guidance_score}
s_{\mathrm{guide}}(\mathbf{x}_\sigma)
=
s_{\mathrm{base}}(\mathbf{x}_\sigma)
+
(1+w) \nabla_{\mathbf{x}_\sigma}\log p_\phi(\mathbf{a}\mid \mathbf{x}_\sigma,\mathbf{z}_\sigma,\mathcal{P}),
\end{equation}
where $s_{\mathrm{base}}$ denotes the generative score and $p_\phi$ denotes an external classifier or property predictor.
When taking the score with respect to $\mathbf{x}_\sigma$, we treat $\mathbf{z}_\sigma$ as part of the conditioning context of the joint state.
Under the VE parameterization, this relation follows from the standard Tweedie identity.
The forward process is
\(
\mathbf{x}_\sigma=\mathbf{x}_0+\sigma\boldsymbol{\epsilon},
\;
\boldsymbol{\epsilon}\sim\mathcal{N}(\mathbf{0},I)
\),
so
\(
p(\mathbf{x}_\sigma\mid\mathbf{x}_0)=\mathcal{N}(\mathbf{x}_0,\sigma^2I)
\).
For fixed conditioning context $(\mathbf{z}_\sigma,\mathcal{P},\mathbf{a})$, Tweedie's formula gives
\begin{equation}
\mathbb{E}[\mathbf{x}_0\mid \mathbf{x}_\sigma,\mathbf{z}_\sigma,\mathcal{P},\mathbf{a}]
=
\mathbf{x}_\sigma+\sigma^2\nabla_{\mathbf{x}_\sigma}\log p(\mathbf{x}_\sigma\mid \mathbf{z}_\sigma,\mathcal{P},\mathbf{a}).
\end{equation}
Approximating this posterior mean with the denoiser prediction
\(
D_\theta(\mathbf{x}_\sigma,\mathbf{z}_\sigma,\sigma,\mathcal{P},\mathbf{a})
\approx
\mathbb{E}[\mathbf{x}_0\mid \mathbf{x}_\sigma,\mathbf{z}_\sigma,\mathcal{P},\mathbf{a}]
\)
and rearranging yields
\begin{equation}\label{eq:cfg_x0_score}
\nabla_{\mathbf{x}_\sigma}\log p(\mathbf{x}_\sigma \mid \mathbf{z}_\sigma,\mathcal{P},\mathbf{a})
\approx
\frac{D_\theta(\mathbf{x}_\sigma,\mathbf{z}_\sigma,\sigma,\mathcal{P},\mathbf{a})-\mathbf{x}_\sigma}{\sigma^2}.
\end{equation}
Applying the same VE identity to the explicit classifier-guidance score in Eq.~\eqref{eq:classifier_guidance_score} yields the equivalent $x_0$-space correction
\begin{equation}\label{eq:classifier_guidance_x0}
D_{\mathrm{guide}}(\mathbf{x}_\sigma,\mathbf{z}_\sigma,\sigma,\mathcal{P})
\approx
D_{\mathrm{base}}(\mathbf{x}_\sigma,\mathbf{z}_\sigma,\sigma,\mathcal{P})
+
w \sigma^2 \nabla_{\mathbf{x}_\sigma}\log p_\phi(\mathbf{a}\mid \mathbf{x}_\sigma,\mathbf{z}_\sigma,\mathcal{P}).
\end{equation}
Applying Eq.~\eqref{eq:cfg_x0_score} to both the conditional and dropped-condition branches in Eq.~\eqref{eq:cfg_score} gives
\begin{equation}
\nabla_{\mathbf{x}_\sigma}\log p_w
\approx
\frac{(1+w)D_\theta(\mathbf{x}_\sigma,\mathbf{z}_\sigma,\sigma,\mathcal{P},\mathbf{a})
-wD_\theta(\mathbf{x}_\sigma,\mathbf{z}_\sigma,\sigma,\mathcal{P},\varnothing)
-\mathbf{x}_\sigma}{\sigma^2}.
\end{equation}
Under the probability-flow ODE parameterization of VE models~\cite{songscore,karras2022elucidating}, the reverse update field induced by a clean prediction $\hat{\mathbf{x}}_0$ can be written as
\begin{equation}
\mathbf{d}
=
\frac{\mathbf{x}_\sigma-\hat{\mathbf{x}}_0}{\sigma}.
\end{equation}
Applying this identity to the base predictor
\(
D_{\mathrm{base}}(\mathbf{x}_\sigma,\mathbf{z}_\sigma,\sigma,\mathcal{P})
\)
gives the unguided field
\begin{equation}
\mathbf{d}_{\mathrm{base}}
=
\frac{\mathbf{x}_\sigma-D_{\mathrm{base}}(\mathbf{x}_\sigma,\mathbf{z}_\sigma,\sigma,\mathcal{P})}{\sigma}.
\end{equation}
Substituting the classifier-guided clean prediction from Eq.~\eqref{eq:classifier_guidance_x0} then recovers the main-text update rule
\begin{equation}
\mathbf{d}_{\mathrm{guide}}
=
\mathbf{d}_{\mathrm{base}}
- w \sigma \nabla_{\mathbf{x}_\sigma}\log p_\phi(\mathbf{a}\mid \mathbf{x}_\sigma,\mathbf{z}_\sigma,\mathcal{P}),
\end{equation}
which is Eq.~\eqref{eq:classifier_guidance_ode} in the main text.
This makes the noise-level dependence explicit: an external classifier gradient enters the denoiser through a $\sigma^2$ factor, and the sampler update through a $\sigma$ factor.

\subsection{Recovering the Classifier-Free Guidance Rule}
\label{appendix:cfg_recover}

Returning to Eq.~\eqref{eq:cfg_score}, we now recover the main-text implementation of classifier-free guidance.
Comparing the expression above with the VE identity
\begin{equation}
\nabla_{\mathbf{x}_\sigma}\log p_w
=
\frac{\hat{\mathbf{x}}_\theta^{\mathrm{cfg}}-\mathbf{x}_\sigma}{\sigma^2},
\end{equation}
we identify the guided clean prediction as
\begin{equation}
\hat{\mathbf{x}}_\theta^{\mathrm{cfg}}
=
(1+w)D_\theta(\mathbf{x}_\sigma,\mathbf{z}_\sigma,\sigma,\mathcal{P},\mathbf{a})
-wD_\theta(\mathbf{x}_\sigma,\mathbf{z}_\sigma,\sigma,\mathcal{P},\varnothing).
\end{equation}
Using the implementation parameterization $s=w+1$ (equivalently, $w=s-1$) then yields
\begin{equation}
\hat{\mathbf{x}}_\theta^{\mathrm{cfg}}
=
D_\theta(\mathbf{x}_\sigma,\mathbf{z}_\sigma,\sigma,\mathcal{P},\varnothing)
+ s\!\left(
D_\theta(\mathbf{x}_\sigma,\mathbf{z}_\sigma,\sigma,\mathcal{P},\mathbf{a})
- D_\theta(\mathbf{x}_\sigma,\mathbf{z}_\sigma,\sigma,\mathcal{P},\varnothing)\right),
\end{equation}
which is exactly the coordinate branch in Eq.~\eqref{eq:cfg_prediction} of the main text.
For the discrete branch, we do not use a continuous score over $\mathbf{z}_\sigma$.
Instead, we apply the same conditional / dropped-condition extrapolation directly to the categorical logits:
\begin{equation}
\hat{H}_\theta^{\mathrm{cfg}}
=
H_\theta(\mathbf{x}_\sigma,\mathbf{z}_\sigma,\sigma,\mathcal{P},\varnothing)
+ s\!\left(
H_\theta(\mathbf{x}_\sigma,\mathbf{z}_\sigma,\sigma,\mathcal{P},\mathbf{a})
- H_\theta(\mathbf{x}_\sigma,\mathbf{z}_\sigma,\sigma,\mathcal{P},\varnothing)\right).
\end{equation}
Applying $\mathrm{softmax}(\hat{H}_\theta^{\mathrm{cfg}})$ then recovers the discrete branch in Eq.~\eqref{eq:cfg_prediction}.

\section{EDM Preconditioning and Target-Aware Coordinate Scaling}
\label{appendix:edm_preconditioning}

This section summarizes the EDM preconditioning recipe used by PocketVE and distinguishes the standard VE/EDM machinery from the target-aware adaptation introduced by our pocket-conditioned setting.
The coefficient definitions themselves are standard EDM/Karras-style components reused from VEDA~\cite{karras2022elucidating,zhang2026veda}; the main PocketVE-specific point is how the ligand coordinates are scaled before entering the shared protein--ligand geometric graph.

Under the VE corruption process, clean ligand coordinates are perturbed as
\begin{equation}\label{eq:appendix_ve_corruption}
\mathbf{x}_\sigma = \mathbf{x}_0 + \sigma \boldsymbol{\epsilon},
\qquad
\boldsymbol{\epsilon}\sim\mathcal{N}(\mathbf{0},I).
\end{equation}
Standard EDM writes the coordinate denoiser in the preconditioned form
\begin{equation}\label{eq:appendix_edm_precond}
D_\theta(\mathbf{x}_\sigma,\sigma)
=
c_{\mathrm{skip}}(\sigma)\mathbf{x}_\sigma
+
c_{\mathrm{out}}(\sigma)\,
F_\theta\!\left(c_{\mathrm{in}}(\sigma)\mathbf{x}_\sigma,\,
c_{\mathrm{noise}}(\sigma)\right),
\end{equation}
with
\begin{equation}\label{eq:appendix_edm_coeffs_a}
c_{\mathrm{skip}}(\sigma)=\frac{\sigma_{\mathrm{data}}^2}{\sigma^2+\sigma_{\mathrm{data}}^2},
\qquad
c_{\mathrm{out}}(\sigma)=\frac{\sigma\,\sigma_{\mathrm{data}}}{\sqrt{\sigma^2+\sigma_{\mathrm{data}}^2}},
\end{equation}
\begin{equation}\label{eq:appendix_edm_coeffs_b}
c_{\mathrm{in}}(\sigma)=\frac{1}{\sqrt{\sigma^2+\sigma_{\mathrm{data}}^2}},
\qquad
c_{\mathrm{noise}}(\sigma)=\frac{\log \sigma}{4}.
\end{equation}
Here $c_{\mathrm{in}}$ normalizes the noisy input magnitude, $c_{\mathrm{skip}}$ keeps the prediction anchored to the noisy sample at low noise, $c_{\mathrm{out}}$ rescales the network residual back to the clean-coordinate domain, and $c_{\mathrm{noise}}$ provides the scalar noise embedding.

PocketVE keeps the standard output preconditioning and noise embedding, while modifying how the EDM input scaling is applied before the pocket--ligand graph.
At the notation level used in the main text, this is written as
\begin{equation}\label{eq:appendix_pocketve_precond}
D_\theta(\mathbf{x}_\sigma,\mathbf{z}_\sigma,\sigma,\mathcal{P})
=
c_{\mathrm{skip}}(\sigma)\mathbf{x}_\sigma
+
c_{\mathrm{out}}(\sigma)\,
F_\theta\!\left(\tilde{\mathbf{x}}_\sigma;\, c_{\mathrm{noise}}(\sigma),\, \mathbf{z}_\sigma,\, \mathcal{P}\right).
\end{equation}
In the code, this is realized through an equivalent two-stage implementation: the network first operates on the scaled ligand coordinates
\begin{equation}\label{eq:appendix_scaled_input}
\mathbf{x}_{\mathrm{net}}
=
\sigma_{\mathrm{data}}\,c_{\mathrm{in}}(\sigma)\mathbf{x}_\sigma
=
\tilde{\mathbf{x}}_\sigma,
\end{equation}
and predicts the coordinates in that scaled frame as
\begin{equation}\label{eq:appendix_scaled_output}
\hat{\mathbf{x}}_{\mathrm{net}}
=
F_\theta\!\left(\mathbf{x}_{\mathrm{net}};\, c_{\mathrm{noise}}(\sigma),\, \mathbf{z}_\sigma,\, \mathcal{P}\right),
\end{equation}
and the final coordinate prediction in the default constant mode is reconstructed as
\begin{equation}\label{eq:appendix_output_reconstruct}
\hat{\mathbf{x}}_0
=
c_{\mathrm{skip}}(\sigma)\mathbf{x}_\sigma
+
c_{\mathrm{out}}(\sigma)\big(\hat{\mathbf{x}}_{\mathrm{net}}-\mathbf{x}_{\mathrm{net}}\big).
\end{equation}
This makes explicit that the network prediction is interpreted relative to the scaled ligand input before being mapped back to the original coordinate frame.
The discrete atom-type branch is handled separately through the DFM-based categorical predictor, so the present discussion is only about the continuous coordinate preconditioning.

The target-aware adaptation appears in the network input.
In textbook EDM, one would feed $c_{\mathrm{in}}(\sigma)\mathbf{x}_\sigma$ into the coordinate backbone.
In the actual PocketVE implementation, however, the ligand coordinates are effectively fed as
\begin{equation}\label{eq:appendix_scale_preserve_input}
\sigma_{\mathrm{data}}\,c_{\mathrm{in}}(\sigma)\mathbf{x}_\sigma
=
\frac{\sigma_{\mathrm{data}}}{\sqrt{\sigma^2+\sigma_{\mathrm{data}}^2}}\mathbf{x}_\sigma
=
\tilde{\mathbf{x}}_\sigma.
\end{equation}
The same scaled input is used as the reference point in the output reconstruction, so the model predicts a correction relative to the scaled ligand coordinates rather than absolute coordinates in the network frame.
The design is necessary because ligand and protein coordinates interact inside the same distance-based equivariant graph, while the pocket coordinates are not scaled by EDM coefficients.
If the ligand alone were shrunk by the textbook $c_{\mathrm{in}}(\sigma)$ factor, then at low noise the ligand would be rescaled by approximately $1/\sigma_{\mathrm{data}}$ while the pocket remained in the original coordinate frame, distorting protein--ligand distances.
The scale-preserving input together with the matching output recentering avoids that mismatch.
As $\sigma\to 0$, it satisfies $\tilde{\mathbf{x}}_\sigma\to \mathbf{x}_\sigma$, so the ligand remains in the same physical frame as the pocket near the clean limit, while at larger noise levels it still attenuates the coordinate magnitude in the same spirit as EDM.

This target-aware scaling also explains why PocketVE uses a comparatively large $\sigma_{\mathrm{data}}$.
Our coordinates are normalized around the pocket center rather than around each ligand itself, so $\sigma_{\mathrm{data}}$ must cover both the internal ligand extent and the ligand displacement from the pocket center.
For the same reason, we keep $\sigma_{\min}=10^{-3}$ small enough to preserve near-clean coordinate precision in the final reverse steps.
For training, the continuous coordinate branch uses the EDM-style weighting induced by the implementation loss
\(
\lVert \hat{\mathbf{x}}_0-\mathbf{x}_0\rVert_2^2 / c_{\mathrm{out}}(\sigma)^2
\),
while the categorical branch uses a separately weighted cross-entropy term.
The concrete dataset statistics behind these choices, together with the full schedule specification, are given in Appendix~\ref{appendix:sampling_details}.

\section{Sampling Details}
\label{appendix:sampling_details}

Training was run on a single NVIDIA A100 GPU for approximately 24 hours.
The host machine used four AMD EPYC 7513 32-Core Processor CPUs and 100~GB of memory.

For the VE coordinate branch, we set $\sigma_{\max}=800$, $\sigma_{\min}=10^{-3}$, and $\sigma_{\mathrm{data}}=10.0$.
The choice of $\sigma_{\mathrm{data}}=10.0$ is target-aware rather than ligand-only.
On the training set, the per-ligand coordinate spread has mean and median standard deviation 3.82~\AA{}, with a maximum of 9.91~\AA{}.
Because our coordinates are centered by the protein pocket rather than by each ligand itself, the relevant clean-data scale must also cover ligand displacement from the pocket center.
Under this normalization, the ligand-center distance has mean 2.31~\AA{}, median 1.98~\AA{}, and maximum 9.64~\AA{}.
We therefore choose $\sigma_{\mathrm{data}}=10.0$ so that the preconditioning scale covers both the internal ligand extent and the pocket-relative ligand offset.
We also keep $\sigma_{\min}=10^{-3}$ small enough to preserve near-clean coordinate precision at the end of the reverse trajectory.

We traverse the noise levels with a generalized arcsin schedule, which extends the arcsin scheduler used in VEDA~\cite{zhang2026veda}.
Let $\tau_i=1-i/N$ and
\begin{equation}\label{eq:gen_arcsin_schedule}
\sigma_i
=
\exp\!\left(
\log\sigma_{\min}
+
\left[
(1-\beta)\tau_i
+
\beta\frac{2}{\pi}\arcsin(\tau_i^p)
\right]
(\log\sigma_{\max}-\log\sigma_{\min})
\right),
\end{equation}
where we use $\beta=2.2$ and $p=0.57$ in all experiments.
Let $\sigma_0>\sigma_1>\cdots>\sigma_N$ denote the resulting noise schedule.

For the discrete atom types, we first convert the type head into a clean categorical prediction
\begin{equation}\label{eq:sampling_type_pred}
\mathbf{p}_{\theta,i}
=
\mathrm{softmax}\!\big(H_\theta(\mathbf{x}_{\sigma_i},\mathbf{z}_{\sigma_i},\sigma_i,\mathcal{P})\big).
\end{equation}
We then update $\mathbf{z}_{\sigma_i}$ with a discrete sampler based on the DFM transition-rate formulation~\cite{campbell2024generative}.
In this view, the transition rate toward category $j$ is
\begin{equation}\label{eq:sampling_dfm_rate}
R_\theta(z_{\sigma_i},j)
=
\omega(\sigma_i)\,p_\theta(z_0=j\mid z_{\sigma_i})
+
\eta_i\,p_\theta(z_0=z_{\sigma_i}\mid z_{\sigma_i}),
\end{equation}
where
\begin{equation}\label{eq:sampling_dfm_omega}
\omega(\sigma_i)
=
\frac{\eta_i S(1-m(\sigma_i))+\eta_i m(\sigma_i)+m'(\sigma_i)}{m(\sigma_i)},
\end{equation}
$S$ is the number of atom categories, $p_\theta(z_0=\cdot\mid z_{\sigma_i})$ is given by the predicted categorical distribution from Eq.~\eqref{eq:sampling_type_pred}, and we follow the variable-noise setting in \cite{campbell2024generative} with $\eta_i=\eta/m'(\sigma_i)$.
Equivalently, this step can be interpreted as masked-token refinement under the same schedule $m(\sigma)$.

\section{Sampling-Step Ablation}
\label{appendix:sampling_steps}

We ablate the number of reverse sampling steps using representative CFG scales $s\in\{0,1,5,10\}$.
Figure~\ref{fig:appendix_sampling_steps} shows that moving from 25 to 100 steps improves both predicted binding strength and 3D validity across guidance scales, while increasing from 100 to 200 steps gives smaller additional gains relative to the extra sampling cost.
We therefore use $N=100$ as the default operating point in the main experiments.

\begin{figure}[htbp]
\centering
\begin{minipage}[t]{0.49\textwidth}
\centering
\includegraphics[width=\linewidth]{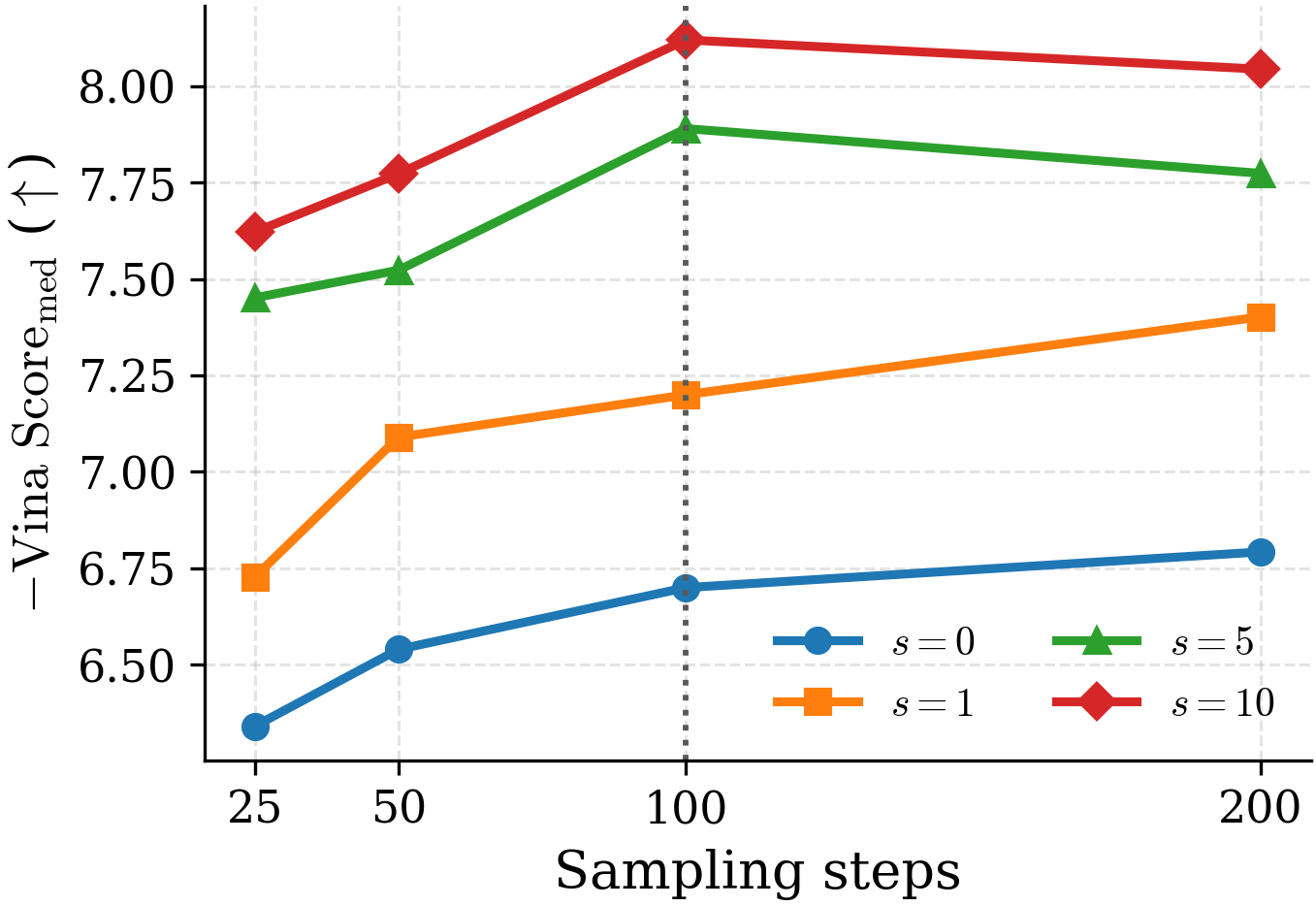}
\end{minipage}\hfill
\begin{minipage}[t]{0.49\textwidth}
\centering
\includegraphics[width=\linewidth]{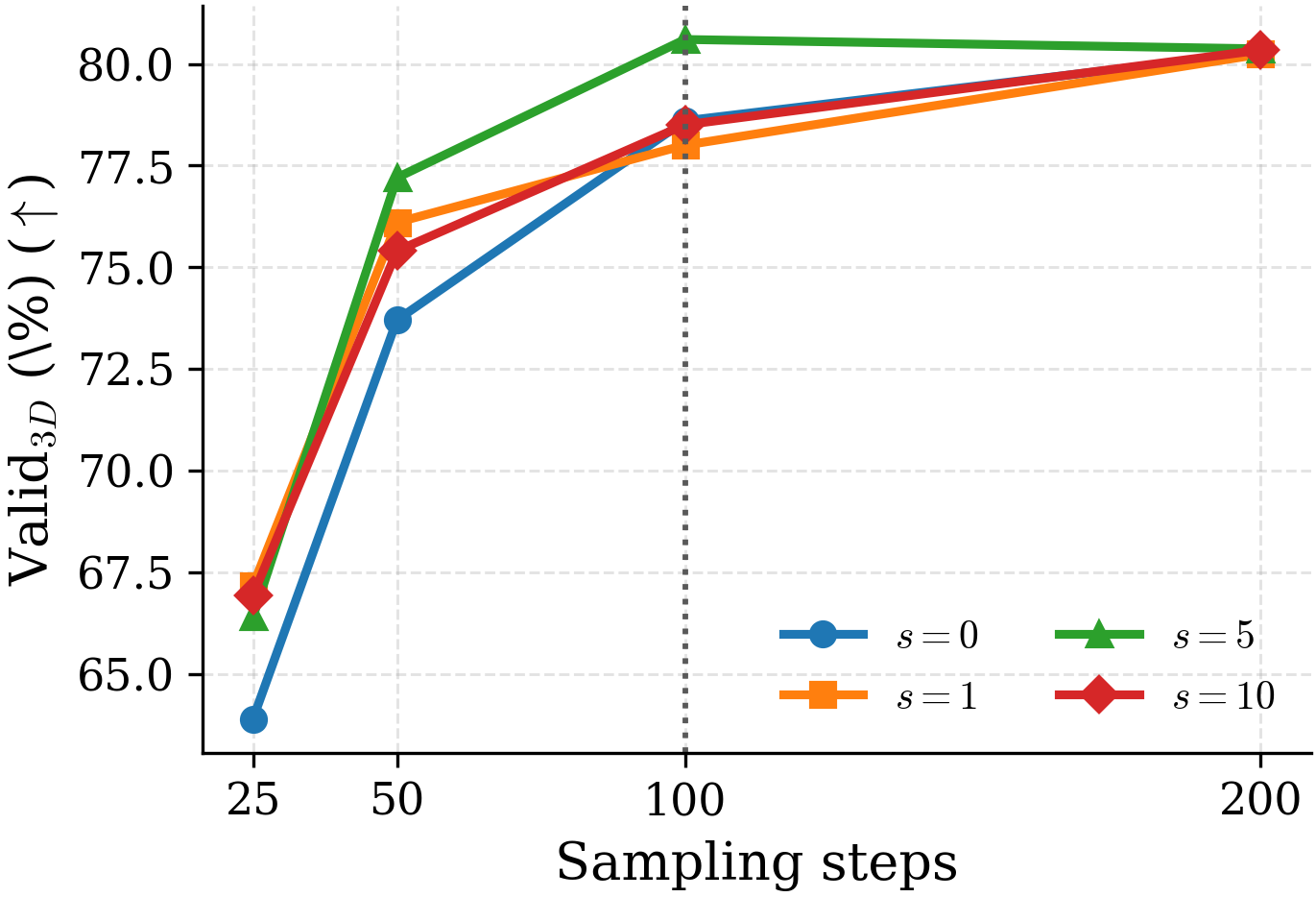}
\end{minipage}

\captionsetup{skip=4pt}
\caption{Sampling-step ablation under representative CFG scales.
Left: median Vina Score is shown as $-\mathrm{Vina\ Score}_{\mathrm{med}}$, so higher is better.
Right: Valid$_{3\text{D}}$ under the GenBench3D protocol.
The vertical dashed line marks the default setting $N=100$.}
\label{fig:appendix_sampling_steps}
\end{figure}

\begin{table}[htbp]
\centering

\captionsetup{skip=4pt}
\caption{Sampling-step ablation for representative CFG scales.
We report median Vina Score, Valid$_{3\text{D}}$, and median strain energy.
Lower Vina Score and strain energy are better; higher Valid$_{3\text{D}}$ is better.}
\label{tab:appendix_sampling_steps}
\scriptsize
\setlength{\tabcolsep}{5pt}
\begin{tabular}{ccccc}
\toprule
$s$ & Steps & Vina Score $(\downarrow)$ & Valid$_{3\text{D}}$ $(\uparrow)$ & Strain $(\downarrow)$ \\
\midrule
0 & 25 & -6.34 & 63.89 & 398.3 \\
0 & 50 & -6.54 & 73.70 & 231.3 \\
0 & 100 & -6.70 & 78.60 & 164.9 \\
0 & 200 & -6.79 & 80.29 & 144.4 \\
1 & 25 & -6.73 & 67.17 & 324.1 \\
1 & 50 & -7.09 & 76.10 & 178.9 \\
1 & 100 & -7.20 & 78.00 & 127.4 \\
1 & 200 & -7.40 & 80.24 & 116.5 \\
5 & 25 & -7.45 & 66.40 & 293.0 \\
5 & 50 & -7.52 & 77.20 & 158.8 \\
5 & 100 & -7.89 & \textbf{80.60} & 127.9 \\
5 & 200 & -7.77 & 80.37 & \textbf{111.2} \\
10 & 25 & -7.62 & 66.93 & 326.5 \\
10 & 50 & -7.77 & 75.40 & 172.7 \\
10 & 100 & \textbf{-8.12} & 78.50 & 137.7 \\
10 & 200 & -8.04 & 80.34 & 118.9 \\
\bottomrule
\end{tabular}
\end{table}

\section{Pocket Perturbation Details}
\label{appendix:pocket_perturbation_details}

In our implementation, pocket perturbation is applied directly to pocket coordinates:
\begin{equation}\label{eq:pocket_jitter}
  \tilde{\mathbf{y}} = \mathbf{y} + \boldsymbol{\eta}, \qquad \boldsymbol{\eta}\sim \mathcal{N}(\mathbf{0}, \sigma_p^2 \mathbf{I}),
\end{equation}
while pocket atom features are kept unchanged.

The following local statement formalizes the regularization view of pocket perturbation using the standard connection between input noise, vicinal risk minimization, and Jacobian regularization~\cite{bishop1995training,chapelle2000vicinal}.

\begin{proposition}[Pocket perturbation regularizes conditional sensitivity]
Let $f_\theta(\cdot,\mathcal{P})$ denote either the coordinate denoising head or the pre-softmax type head at a fixed ligand noise level $\sigma$, and let $\ell(f_\theta(\cdot,\mathcal{P}),t)$ be a smooth per-sample denoising loss with target $t$.
For Gaussian pocket perturbation $\boldsymbol{\eta}\sim\mathcal{N}(0,\sigma_p^2 I)$, the perturbed-pocket objective satisfies
\begin{equation}\label{eq:pocket_taylor}
  \mathbb{E}_{\boldsymbol{\eta}}\,
  \ell(f_\theta(\cdot,\mathcal{P}+\boldsymbol{\eta}),t)
  =
  \ell(f_\theta(\cdot,\mathcal{P}),t)
  + \frac{\sigma_p^2}{2}
  \mathrm{Tr}\!\left(\nabla_{\mathcal{P}}^2
  \ell(f_\theta(\cdot,\mathcal{P}),t)\right)
  + O(\sigma_p^4).
\end{equation}
under standard smoothness assumptions.
For a squared coordinate loss, the leading nonnegative term contains
$\frac{\sigma_p^2}{2}\lVert J_{\mathcal{P}} f_\theta\rVert_F^2$ near the optimum.
Thus pocket perturbation penalizes sharp dependence of the denoiser on small pocket-coordinate changes.
Equivalently, the model is trained against a local neighborhood of mildly perturbed pocket views rather than a single rigid conditioning instance.
When the perturbation scale is small, this neighborhood can be interpreted as approximating nearby pocket-coordinate perturbations.
\end{proposition}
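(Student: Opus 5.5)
We will prove the expansion by a fourth-order Taylor expansion of $g(\mathbf{y}) := \ell(f_\theta(\cdot,\mathcal{P}(\mathbf{y})),t)$ in the pocket coordinates $\mathbf{y}$, with the features $\mathbf{c}$ held fixed as in Eq.~\eqref{eq:pocket_jitter}. We then take expectations term by term under $\boldsymbol{\eta}\sim\mathcal{N}(0,\sigma_p^2 I)$.

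The "standard smoothness assumptions" we adopt are the following. $g$ is $C^4$ in a neighbourhood of $\mathbf{y}$. Its fourth derivative is bounded, or grows at most polynomially, so that the Taylor remainder is integrable against the Gaussian. Since $\sigma_p\le 0.5$ in our schedule, an alternative is to state the result for $g$ with globally bounded fourth derivatives.

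The expansion is
\[
g(\mathbf{y}+\boldsymbol{\eta}) = g(\mathbf{y}) + \nabla g^\top\boldsymbol{\eta} + \tfrac12\boldsymbol{\eta}^\top\nabla^2 g\,\boldsymbol{\eta} + \tfrac16 D^3g[\boldsymbol{\eta}]^3 + R_4(\boldsymbol{\eta}),
\]
with $|R_4|\le C\|\boldsymbol{\eta}\|^4$. We then use the Gaussian moment identities:
\begin{itemize}
\item $\mathbb{E}\boldsymbol{\eta}=0$ and all odd moments vanish, which kills the first- and third-order terms.
\item $\mathbb{E}[\boldsymbol{\eta}^\top A\boldsymbol{\eta}]=\sigma_p^2\mathrm{Tr}(A)$, which produces $\frac{\sigma_p^2}{2}\mathrm{Tr}(\nabla^2_{\mathcal{P}}g)$.
\item $\mathbb{E}\|\boldsymbol{\eta}\|^4=O(\sigma_p^4)$ (equal to $(3M)^2+2\cdot 3M$ times $\sigma_p^4$), which bounds the remainder.
\end{itemize}
This gives Eq.~\eqref{eq:pocket_taylor}. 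The constant depends on the dimension $3M$, which is fixed for a given pocket.

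For the squared coordinate loss we take $\ell=\frac12\|f-t\|^2$; the paper's weighting $\lambda_x(\sigma)$ only rescales this. Applying the chain rule twice gives
\[
\nabla^2_{\mathcal{P}}\ell = J^\top J + \sum_k (f_k-t_k)\nabla^2_{\mathcal{P}} f_k,
\]
where $J=J_{\mathcal{P}}f_\theta$. Since $\mathrm{Tr}(J^\top J)=\|J\|_F^2$, the leading term contains $\frac{\sigma_p^2}{2}\|J\|_F^2\ge 0$.

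The remaining term is weighted by the residual $f-t$. "Near the optimum" we interpret in the Bishop sense: averaged over the data at fixed $(\mathbf{x}_\sigma,\mathbf{z}_\sigma,\mathcal{P})$, the minimizer is the conditional mean, so $\mathbb{E}[f-t\mid\cdot]=0$. The residual-curvature term therefore vanishes in expectation to leading order, or is $O(\sigma_p^2)$ smaller, leaving the Jacobian penalty. We will phrase this carefully as a statement about the population objective rather than the per-sample loss, since per-sample the curvature term need not vanish.

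For the pre-softmax type head with cross-entropy, $\ell=-\log\mathrm{softmax}(f)_t$. Its Hessian in $f$ is $\mathrm{diag}(p)-pp^\top\succeq 0$, so the analogous decomposition gives a nonnegative term $\mathrm{Tr}(J^\top(\mathrm{diag}(p)-pp^\top)J)$ together with a residual-weighted curvature term. We will note this, but the statement only claims the Frobenius form for the squared loss.

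The main obstacle is making the "near the optimum" step honest. The clean identity holds only after conditional averaging, and only to leading order, because the optimal $f$ for the perturbed objective is itself shifted by $O(\sigma_p^2)$. We will handle this as Bishop does: state the result as equality of the regularized and perturbed objectives up to $O(\sigma_p^4)$ at the level of expected loss. We then remark that the residual term changes the minimizer only at higher order. The final interpretive sentences of the proposition (local neighbourhood, vicinal risk) follow by reading $\mathbb{E}_{\boldsymbol{\eta}}\ell$ as a vicinal risk with Gaussian vicinity, as in Chapelle et al.; no further proof is needed for them.
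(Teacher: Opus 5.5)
Your proposal is correct and follows the same route as the paper's proof: Taylor-expand the loss in the pocket coordinates and take the Gaussian expectation, so the linear term drops and the quadratic term contracts to $\frac{\sigma_p^2}{2}\mathrm{Tr}\,\nabla^2_{\mathcal{P}}\ell$; then split the Hessian into the Gauss--Newton term $J^\top J$ plus residual-weighted curvature. Your version is tighter on two points the paper glosses over: a literal second-order expansion only gives an $O(\sigma_p^3)$ remainder, so your fourth-order expansion with vanishing third moments is what actually supports the stated $O(\sigma_p^4)$; and your conditional-mean (Bishop-style) treatment of the residual term replaces the paper's claim that residuals are small near a well-fit denoiser, which need not hold per sample because the optimal denoiser keeps the irreducible posterior variance of $\mathbf{x}_0$ given $\mathbf{x}_\sigma$.
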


\begin{proof}
Apply a second-order Taylor expansion to $\ell(f_\theta(\cdot,\mathcal{P}+\boldsymbol{\eta}),t)$ around $\mathcal{P}$.
The first-order term vanishes after expectation because $\mathbb{E}[\boldsymbol{\eta}]=0$, and the second-order term contracts the pocket Hessian with $\mathbb{E}[\boldsymbol{\eta}\boldsymbol{\eta}^{\top}]=\sigma_p^2 I$, giving Eq.~\eqref{eq:pocket_taylor}.
For squared denoising loss, the Hessian decomposes into a Gauss--Newton term $J_{\mathcal{P}} f_\theta^\top J_{\mathcal{P}} f_\theta$ plus residual-weighted second-derivative terms; near a well-fit denoiser the residual terms are small, leaving the stated Jacobian penalty.
\end{proof}

This local analysis also explains why the perturbation can be made noise-level dependent.
At large ligand noise $\sigma$, the denoiser should use coarse pocket information and can tolerate stronger pocket smoothing; at small $\sigma$, the reverse process refines local geometry and should not blur steric constraints.
We therefore use a bounded schedule $\sigma_p=\min(0.1\sigma,0.5)$ in the main setting: the proportional term keeps pocket uncertainty below the ligand corruption scale, while the cap prevents the augmentation from erasing Angstrom-scale pocket geometry.
From an information-bottleneck perspective~\cite{tishby2015deep}, this schedule suppresses nuisance information about the exact crystallographic pocket realization while preserving the pocket information needed to predict ligand geometry and affinity.
If $\sigma_p$ is too small, the Jacobian penalty is negligible and the model can overfit to brittle pocket details; if it is too large, the conditioning channel loses target-specific information and the model shifts toward a weaker, less precise conditional distribution.
In this sense, pocket perturbation controls the local Lipschitz behavior of the pocket-conditioning channel: it smooths the denoiser's response to pocket geometry and only indirectly smooths the effective conditional landscape explored by the sampler.
The ablation in Table~\ref{tab:protein_perturbation} is consistent with this bias--variance trade-off: the mild bounded schedule improves local distance fidelity, atom-type JS, Vina metrics, and centroid distance, the corresponding unbounded $0.1\sigma$ schedule remains competitive but is slightly weaker on the same target-specific metrics, the lighter $\min(0.05\sigma,0.25)$ schedule further reduces strain, and the stronger $\min(0.3\sigma,1.5)$ schedule favors broader distributional and SA-oriented statistics but gives up some target-specific precision.

\begin{table}[htbp]
\centering

\captionsetup{skip=4pt}
\caption{Sensitivity to protein-side perturbation strategies. Local JSD averages the eight short-range pairwise distance JSDs over $6$-$6|1,2,4$, $6$-$7|1,2,4$, and $6$-$8|1,2$, corresponding to C--C, C--N, and C--O distance statistics.
JSD-All-12\AA{} pools all intraligand atom-pair distances up to 12~\AA{}, JSD-CC-2\AA{} focuses on short-range carbon--carbon distances, and Atom reports atom-type frequency JS.
Lower JSD, Vina metrics, strain energy, and centroid distance are better; higher QED, SA, Valid$_{3\text{D}}$, and clash-free rate are better.}
\label{tab:protein_perturbation}
\scriptsize
\setlength{\tabcolsep}{2.8pt}
\resizebox{\linewidth}{!}{
\begin{tabular}{l|cccc|cccc|cccc}
\toprule
 & \multicolumn{4}{c|}{JSD $(\downarrow)$} & \multicolumn{4}{c|}{Property / Affinity} & \multicolumn{4}{c}{Geometry} \\
\cmidrule(lr){2-5}\cmidrule(lr){6-9}\cmidrule(lr){10-13}
Perturbation & Local & 12\AA{} & CC-2\AA{} & Atom & QED $(\uparrow)$ & SA $(\uparrow)$ & \makecell{Vina\\$(\downarrow)$} & \makecell{Vina\\Min $(\downarrow)$} & \makecell{Valid$_{3\text{D}}$\\$(\uparrow)$} & \makecell{Strain\\$(\downarrow)$} & \makecell{Clash-\\Free $(\uparrow)$} & \makecell{Centroid\\$(\downarrow)$} \\
\midrule
None & 0.270 & 0.069 & 0.276 & 0.143 & 0.63 & 0.73 & -7.78 & -7.99 & 77.3 & 125.9 & 94.07 & 1.271 \\
Fixed $0.1$ & 0.262 & 0.064 & 0.274 & 0.140 & \textbf{0.65} & 0.72 & -7.70 & -8.15 & 80.0 & 129.2 & 93.15 & 1.257 \\
$0.1\sigma$ & 0.267 & 0.068 & 0.269 & 0.140 & 0.64 & 0.74 & -7.83 & -8.07 & \textbf{81.3} & 122.5 & 94.27 & 1.307 \\
$\min(0.05\sigma,0.25)$ & 0.270 & 0.070 & 0.288 & 0.145 & 0.64 & 0.74 & -7.86 & -8.16 & 81.2 & \textbf{121.0} & 94.13 & 1.275 \\
$\min(0.1\sigma,0.5)$ & \textbf{0.255} & 0.065 & 0.269 & \textbf{0.120} & 0.64 & 0.71 & \textbf{-7.89} & \textbf{-8.19} & 80.6 & 127.9 & 94.02 & \textbf{1.255} \\
$\min(0.2\sigma,1.0)$ & 0.269 & 0.069 & 0.281 & 0.137 & 0.64 & 0.74 & -7.77 & -8.08 & 80.2 & 121.8 & \textbf{94.28} & 1.286 \\
$\min(0.3\sigma,1.5)$ & 0.257 & \textbf{0.063} & \textbf{0.263} & 0.132 & 0.64 & \textbf{0.75} & -7.66 & -7.98 & 79.4 & 118.5 & 93.67 & 1.280 \\
\bottomrule
\end{tabular}
}

\end{table}

\section{Guidance Scale Sensitivity Details}
\label{appendix:cfg_scale_details}

Table~\ref{tab:cfg_scale} and Figures~\ref{fig:cfg_scale_property}--\ref{fig:cfg_scale_geometry} provide the step-wise CFG-scale diagnostics corresponding to the main-text trade-off trajectory in Figure~\ref{fig:pareto_binding_geometry}.
They make the same trend explicit from metric-wise views: weak-to-moderate guidance improves affinity-related and molecular-property objectives while preserving strong geometry, whereas overly strong guidance eventually reduces Valid$_{3\text{D}}$ and increases strain and centroid error.

\begin{table}[htbp]
\centering

\captionsetup{skip=4pt}
\caption{Sensitivity to CFG scale. We report median property metrics together with key GenBench3D geometry metrics.}
\label{tab:cfg_scale}
\scriptsize
\setlength{\tabcolsep}{3.5pt}
\begin{tabular}{c|ccc|cccc}
\toprule
$s$ & QED $(\uparrow)$ & SA $(\uparrow)$ & Vina Score $(\downarrow)$ & Valid$_{3\text{D}}$ $(\uparrow)$ & Strain $(\downarrow)$ & Clash-Free $(\uparrow)$ & Centroid Dist. $(\downarrow)$ \\
\midrule
0 & 0.48 & 0.63 & -6.70 & 78.64 & 164.9 & 90.59 & 1.25 \\
0.5 & 0.57 & 0.66 & -7.03 & 78.74 & 152.3 & 92.02 & \textbf{1.24} \\
1 & 0.60 & 0.70 & -7.20 & 77.99 & 127.4 & 90.74 & 1.25 \\
2 & 0.62 & 0.71 & -7.33 & 78.92 & \textbf{125.6} & 92.15 & 1.26 \\
3 & 0.63 & 0.71 & -7.56 & \textbf{81.12} & 127.7 & 93.57 & 1.25 \\
5 & \textbf{0.64} & 0.71 & -7.89 & 80.60 & 127.9 & 94.02 & 1.26 \\
10 & \textbf{0.64} & \textbf{0.72} & \textbf{-8.12} & 78.54 & 137.7 & \textbf{95.03} & 1.28 \\
15 & \textbf{0.64} & 0.70 & -8.00 & 75.14 & 160.4 & 94.89 & 1.31 \\
20 & 0.60 & 0.69 & -7.61 & 72.04 & 187.5 & 94.51 & 1.38 \\
\bottomrule
\end{tabular}

\end{table}

\begin{figure}[htbp]
\begin{minipage}{0.48\textwidth}
\centering
\includegraphics[width=\linewidth]{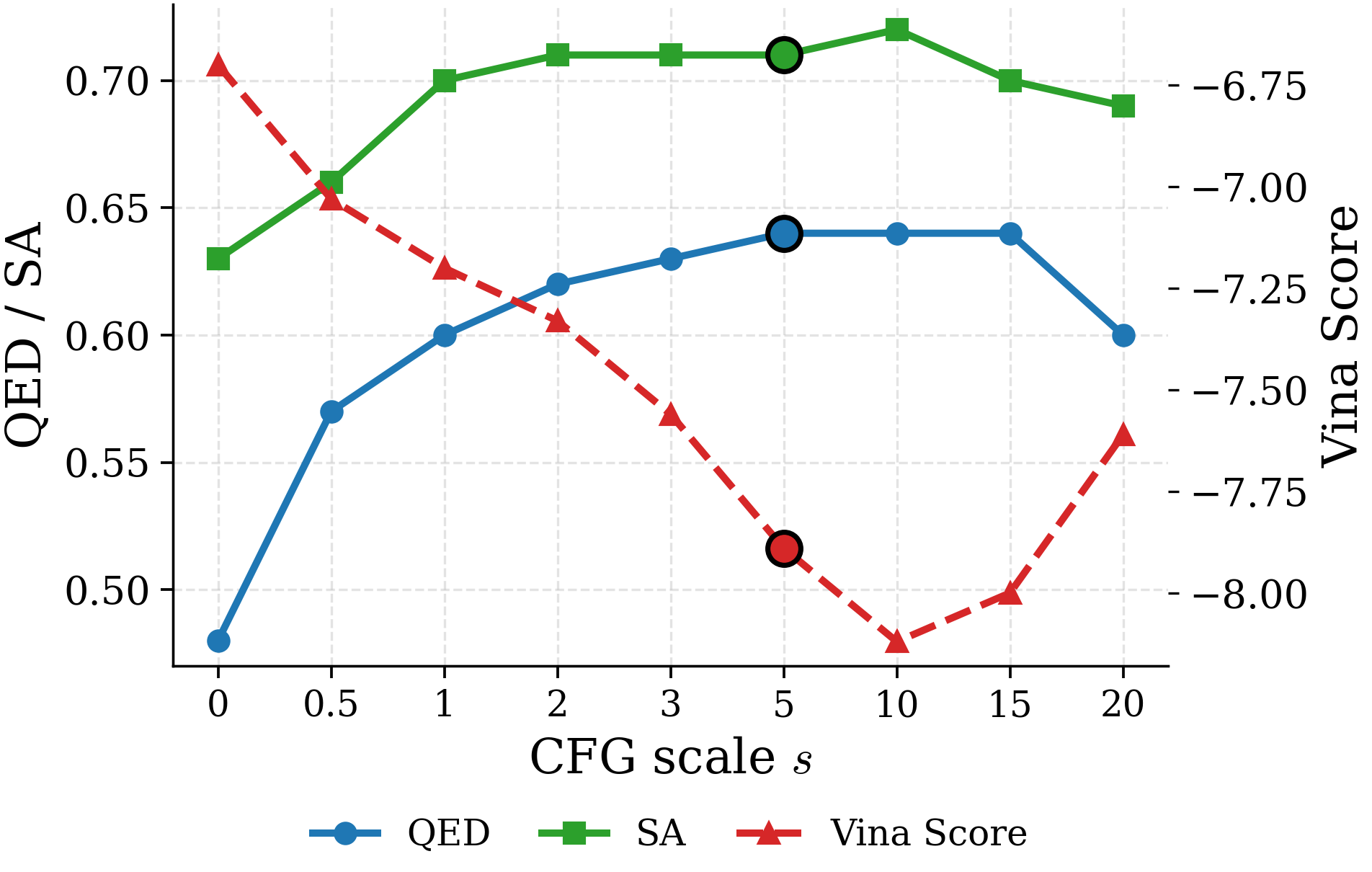}

\captionsetup{skip=4pt}
\caption{Property-side trends under different CFG scales. QED, SA, and Vina Score improve from weak to moderate guidance, with Vina Score reaching its strongest value around $s=10$ and QED/SA saturating around the moderate-to-strong range.}
\label{fig:cfg_scale_property}
\end{minipage}\hfill
\begin{minipage}{0.48\textwidth}
\centering
\includegraphics[width=\linewidth]{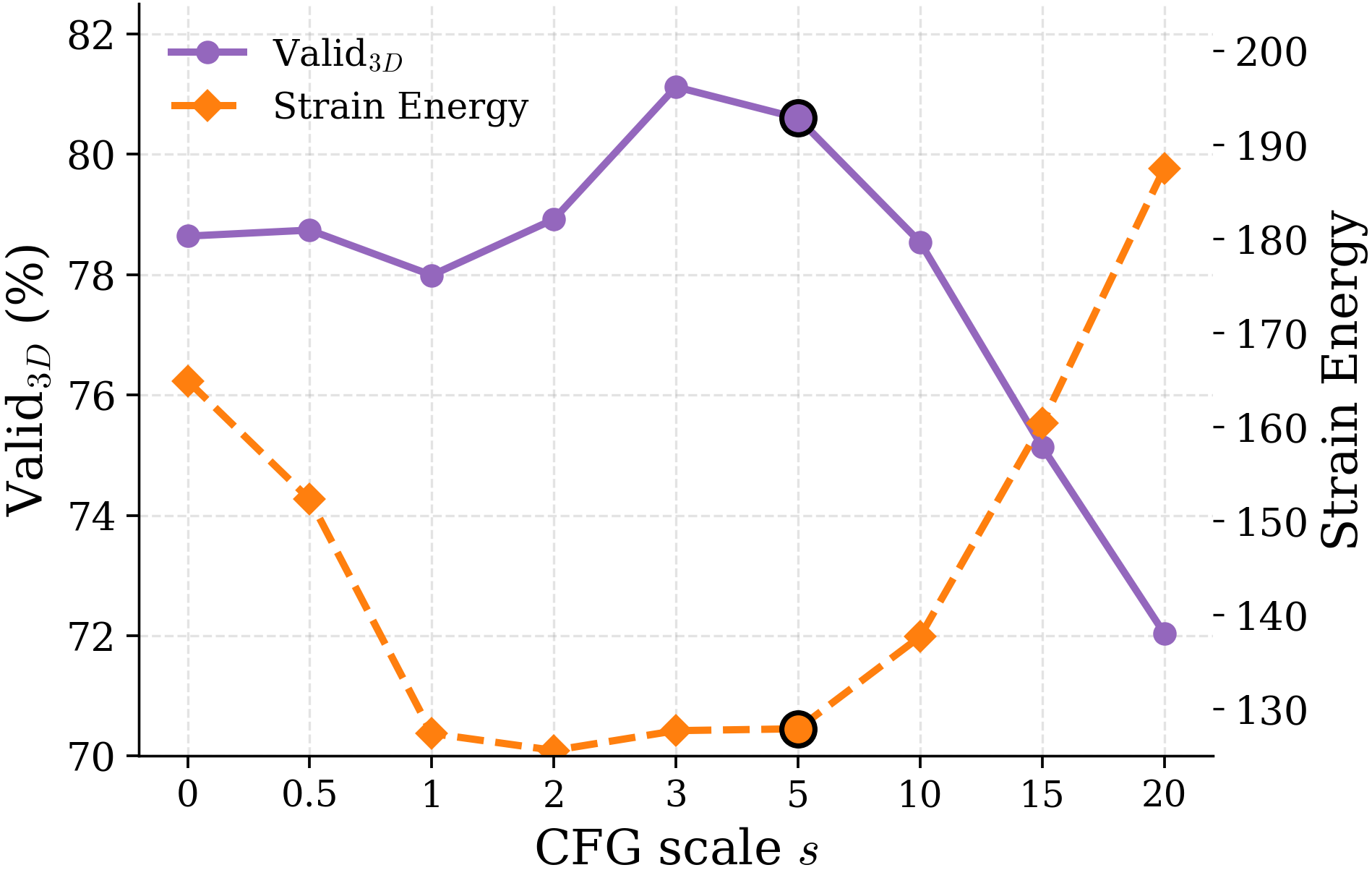}

\captionsetup{skip=4pt}
\caption{Geometry-side trends under different CFG scales. Moderate guidance keeps Valid$_{3\text{D}}$ high and strain energy low, whereas overly strong guidance reduces Valid$_{3\text{D}}$ and increases strain. In this study, $s=5$ is used as the representative operating point.}
\label{fig:cfg_scale_geometry}
\end{minipage}

\end{figure}

\section{RampUp Guidance Schedule}
\label{appendix:rampup_guidance}

In addition to the constant CFG scale used in the main experiments, we also consider a simple RampUp variant inspired by recent analyses of guidance schedules in masked diffusion~\cite{rojas2026improving}.
The motivation is that strong extrapolation at the beginning of sampling can be brittle, since the ligand state is still dominated by high noise and the conditional and dropped-condition predictions may be poorly calibrated.
Instead of applying a fixed scale $s$ at every reverse step, RampUp uses a time-dependent scale
\begin{equation}
  s_i^{\mathrm{ramp}} = \alpha_i s,\qquad
  \alpha_i = \frac{i}{N-1}, \qquad i=0,\ldots,N-1,
\end{equation}
where $N$ is the number of sampling steps and $i$ indexes the reverse trajectory from the highest-noise step to the final low-noise step.
The guided coordinate and atom-type predictions then use the same CFG rule as Eq.~\eqref{eq:cfg_prediction}, with $s$ replaced by $s_i^{\mathrm{ramp}}$ at step $i$.
Thus, early steps remain close to the property-unconditional ($s=0$) trajectory, while later low-noise steps receive the full property-steering strength.
This schedule does not require retraining or additional model evaluations; it only changes the scalar guidance multiplier used during inference.

We use RampUp as an auxiliary diagnostic rather than as the default sampler.
Since the same nominal scale is not directly comparable across constant and RampUp schedules, we compare them at similar median docking strength.
In the mid-guidance regime, RampUp gives a comparable trade-off: for example, constant CFG with $s=2$ gives median Vina Score $-7.33$ and median Vina Min $-7.78$, while RampUp with final scale $s=5$ gives $-7.49$ and $-7.81$.
At this matched operating point, RampUp keeps a similar distributional profile (JSD-All $0.0488$ versus $0.0490$) and slightly higher completion rate ($0.960$ versus $0.954$), with comparable molecular stability ($34.7\%$ versus $34.4\%$).
At stronger docking targets, however, this advantage is not consistent; matching the constant $s=5$ median Vina Score with RampUp requires a larger final scale and lowers molecular stability.
We therefore treat RampUp as a conservative schedule that can improve the median trade-off in a moderate guidance range, rather than as a uniformly better replacement for constant CFG.

\section{Component-Wise Ablation Summary}
\label{appendix:component_ablation}

To make the contribution of each design choice easier to inspect, we collect the main component-wise comparisons in Table~\ref{tab:component_ablation}.
The table reorganizes results by design axis rather than by metric category.
Several entries are repeated from Tables~\ref{tab1},~\ref{tab:cfg_scale}, and~\ref{tab:protein_perturbation} to make the attribution easier to read in one place.
All PocketVE variants use the same test split and GenBench3D evaluation protocol, and guided variants use the same trained checkpoint unless the ablated component changes training.
The schedule row reports an additional ablation, while the other rows reorganize results from the main tables by design axis.
\begin{table}[htbp]
\centering

\captionsetup{skip=4pt}
\caption{Component-wise ablation summary for PocketVE.}
\label{tab:component_ablation}
\scriptsize
\setlength{\tabcolsep}{3.2pt}
\begin{tabular}{l|l|ccc|cccc}
\toprule
Ablation axis & Setting & QED & SA & Vina Score & Valid$_{3\text{D}}$ & Strain & Clash-Free & Centroid \\
\midrule
Backbone/guidance baseline & TAGMol (guided) & 0.56 & 0.56 & -7.77 & 58.6 & 457.4 & 93.21 & 1.54 \\
VE backbone without CFG & PocketVE ($s=0$) & 0.48 & 0.63 & -6.70 & 78.64 & 164.9 & 90.59 & 1.25 \\
Plain conditional sampling & PocketVE ($s=1$) & 0.60 & 0.70 & -7.20 & 77.99 & 127.4 & 90.74 & 1.25 \\
Moderate CFG & PocketVE ($s=5$) & 0.64 & 0.71 & -7.89 & 80.60 & 127.9 & 94.02 & 1.26 \\
Protein perturbation & No perturbation & 0.63 & 0.73 & -7.78 & 77.3 & 125.9 & 94.07 & 1.271 \\
Optimizer & Adam ($s=5$) & 0.64 & 0.73 & -7.84 & 78.34 & 152.8 & 93.74 & 1.272 \\
Optimizer & AdamW ($s=5$) & 0.63 & 0.73 & -7.40 & 76.62 & 170.5 & 93.58 & 1.340 \\
Noise schedule & Log-uniform schedule ($s=5$) & 0.64 & 0.70 & -7.13 & 73.11 & 186.8 & 90.77 & 1.303 \\
Sampler control & No sampling-time noise & 0.59 & 0.63 & -5.75 & 59.70 & 495.7 & 75.70 & 1.44 \\
\bottomrule
\end{tabular}

\end{table}

The comparison between TAGMol (guided) and PocketVE ($s=0$) should be read as the bundled effect of switching to the PocketVE backbone and sampler without guidance, including the VE-style coordinate parameterization, sampling-time noise injection, and generalized arcsin schedule. Because the TAGMol reference itself uses guidance, this gap also includes the removal of TAGMol's own guidance mechanism.
Together with the cumulative ablation in Table~\ref{tab:ablation_runtime}, this shows that the geometric stabilization mainly comes from these backbone-level sampling and parameterization changes before inference-time guidance is applied.
Concretely, Valid$_{3\text{D}}$ increases by about 20 points, while strain energy is reduced by about 64\% relative to guided TAGMol.
The comparison from $s=0$ to $s=5$ then isolates inference-time CFG under the same checkpoint and sampling procedure: moderate guidance improves Vina Score, QED, and SA while preserving high 3D validity.
The no-perturbation row shows that pocket-side perturbation is not a uniformly dominant switch, but it changes the property--geometry frontier and provides a useful reference point in the current setting.
The optimizer rows provide controls at the same CFG scale.
Adam, which follows the optimizer choice used in TAGMol, remains a strong baseline, while AdamW-only training is weaker in both affinity and geometry.
The main Muon--AdamW setting still gives the best overall geometry and strain, suggesting that the final gains are not simply due to replacing Adam with AdamW.
The log-uniform-schedule row further shows that the reported gains are not produced by CFG alone: replacing the generalized arcsin schedule with log-uniform sampling at the same guidance scale leads to weaker Vina Score, lower Valid$_{3\text{D}}$, higher strain, and worse centroid error.

\section{Repeated-Run Variability}
\label{appendix:run_variability}

To document the stability of the final operating point without forcing the reader to cross-reference the main table, Table~\ref{tab:run_variability} reports the final PocketVE metrics together with their repeated-run standard deviations under the same evaluation protocol.
We keep this summary compact and focus on the main distribution, property, affinity, and geometry metrics used throughout the paper.

\begin{table}[htbp]
\centering

\captionsetup{skip=4pt}
\caption{Final PocketVE performance with repeated-run variability.
Each entry is the median across three independent runs $\pm$ standard deviation; the Table~\ref{tab1} run is one of the three.
We report the same representative statistics used in the main text: QED$_{\mathrm{med}}$, SA$_{\mathrm{med}}$, Vina Score$_{\mathrm{med}}$, Vina Min$_{\mathrm{med}}$, Valid$_{3\text{D}}$, Strain$_{\mathrm{med}}$, Clash-Free, Centroid mean, Local, JSD-All-12\AA{}, JSD-CC-2\AA{}, and atom-type JS.
Valid$_{3\text{D}}$ and Clash-Free use the same displayed units as the main table.}
\label{tab:run_variability}
\scriptsize
\setlength{\tabcolsep}{3.2pt}
\begin{tabular}{cccccccc}
\toprule
QED $\uparrow$ & SA $\uparrow$ & Vina $\downarrow$ & Vina Min $\downarrow$ & Valid$_{3\text{D}}$ $\uparrow$ & Strain $\downarrow$ & Clash-Free $\uparrow$ & Centroid $\downarrow$ \\

\midrule
0.63 $\pm$ 0.0018 & 0.72 $\pm$ 0.0047 & -7.734 $\pm$ 0.0150 & -8.026 $\pm$ 0.0204 & 79.27 $\pm$ 0.68 & 124.7 $\pm$ 1.29 & 93.89 $\pm$ 5.91 & 1.284 $\pm$ 0.0127 \\
\midrule
Local $\downarrow$ & 12\AA{} $\downarrow$ & CC-2\AA{} $\downarrow$ & Atom $\downarrow$ & \multicolumn{4}{c}{} \\
\midrule
0.245 $\pm$ 0.0020 & 0.0629 $\pm$ 0.0003 & 0.265 $\pm$ 0.0004 & 0.121 $\pm$ 0.0003 & \multicolumn{4}{c}{} \\

\bottomrule
\end{tabular}

\end{table}

\section{Reference-Normalized Multi-Objective Hits}
\label{appendix:reference_hits}

Table~\ref{tab:reference_hits} reports an auxiliary reference-normalized diagnostic rather than a replacement for the main benchmark metrics.
The goal is to test whether a method produces same-pocket samples that are jointly no worse than the reference ligand in docking affinity, QED, and SA.
Among the filtered methods, PocketVE achieves the highest Joint-Ref average and Joint Success rate.

\begin{table}[htbp]
\centering

\captionsetup{skip=4pt}
\caption{Reference-normalized multi-objective hit metrics; all percentage metrics are higher-is-better.
Overall QED, SA, and diversity averages are included to preserve the average property statistics omitted from the compact main table.
High Affinity follows prior SBDD usage and denotes the per-pocket fraction of generated molecules whose Vina Dock score is no worse than the reference ligand, and Success is the percentage of pockets with at least one such molecule.
High-Aff. Subset reports average QED and SA within the High Affinity subset.
Hit Rate is the fixed-threshold multi-objective rate with QED $\geq 0.4$, SA $\geq 0.5$, and Vina Dock $\leq -8.18$ kcal/mol.
Joint-Ref is the per-pocket fraction of generated molecules that simultaneously satisfy Vina Dock, QED, and SA no worse than the reference ligand.
Joint Success is the percentage of pockets with at least one Joint-Ref molecule.}
\label{tab:reference_hits}
\scriptsize
\setlength{\tabcolsep}{3.2pt}
\begin{tabular}{l|c|ccc|ccc|cc|ccc}
\toprule
\multirow{2}{*}{Method} & \multirow{2}{*}{Hit Rate} & \multicolumn{3}{c|}{Overall Avg.} & \multicolumn{3}{c|}{High Affinity} & \multicolumn{2}{c|}{High-Aff. Subset} & \multicolumn{3}{c}{Joint-Ref} \\
 & & QED & SA & Div. & Avg. & Med. & Success & QED Avg. & SA Avg. & Avg. & Med. & Success \\
\midrule
AR & 12.9 & 0.51 & 0.63 & 0.70 & 41.2 & 37.0 & 72.2 & 0.52 & 0.59 & 4.6 & 0.0 & 36.1 \\
Pocket2Mol & 24.3 & 0.56 & \textbf{0.74} & 0.69 & 48.0 & 51.0 & 88.0 & 0.57 & \textbf{0.72} & \uline{16.5} & \uline{8.0} & \uline{74.0} \\
TargetDiff & 20.5 & 0.48 & 0.58 & 0.72 & 57.6 & 58.3 & 99.0 & 0.50 & 0.56 & 5.7 & 2.0 & 58.0 \\
DecompDiff & 27.7 & 0.45 & 0.61 & 0.68 & 63.8 & 75.0 & 91.9 & 0.44 & 0.59 & 6.8 & 1.1 & 50.5 \\
IPDiff & 27.4 & 0.52 & 0.61 & \textbf{0.74} & 68.2 & 74.6 & 98.0 & 0.52 & 0.57 & 10.0 & 2.6 & 70.0 \\
TAGMol (guided) & 27.7 & 0.55 & 0.56 & 0.69 & 68.6 & 75.7 & 99.0 & 0.55 & 0.54 & 7.1 & 1.5 & 58.0 \\
BindDM & 24.9 & 0.51 & 0.58 & - & 64.2 & 66.1 & \textbf{100.0} & 0.53 & 0.55 & 6.9 & 1.2 & 67.0 \\
ALiDiff & 25.2 & 0.50 & 0.57 & \uline{0.73} & 68.5 & 76.0 & \textbf{100.0} & 0.52 & 0.54 & 7.0 & 0.0 & 48.0 \\
SeFMol & \uline{37.4} & \textbf{0.63} & 0.60 & 0.69 & 68.7 & 76.3 & \textbf{100.0} & \textbf{0.63} & 0.57 & 11.2 & 4.0 & 68.0 \\
PAFlow & 31.5 & 0.49 & 0.57 & 0.71 & \textbf{80.8} & \textbf{93.7} & 99.0 & 0.50 & 0.56 & 8.9 & 1.1 & 52.0 \\
\midrule
PocketVE & \textbf{47.3} & \uline{0.61} & \uline{0.73} & 0.72 & \uline{73.9} & \uline{85.2} & \textbf{100.0} & \uline{0.62} & \uline{0.70} & \textbf{26.4} & \textbf{20.9} & \textbf{92.0} \\
\bottomrule
\end{tabular}

\end{table}

\section{Additional Property--Validity Trade-Offs}
\label{appendix:property_validity_tradeoff}

Figure~\ref{fig:appendix_pareto_property_valid3d} extends the main binding--geometry analysis to the drug-likeness and synthesizability axes.
The same qualitative trade-off remains visible: weak-to-moderate guidance improves the property side without immediately leaving the high-validity regime, whereas stronger guidance eventually pushes the model toward lower Valid$_{3\text{D}}$.
We place these plots in the appendix because they reinforce the same guidance-scale story as Figure~\ref{fig:pareto_binding_geometry}, but from auxiliary property views rather than the primary affinity view.

\begin{figure}[htbp]
\centering
\includegraphics[width=0.95\columnwidth]{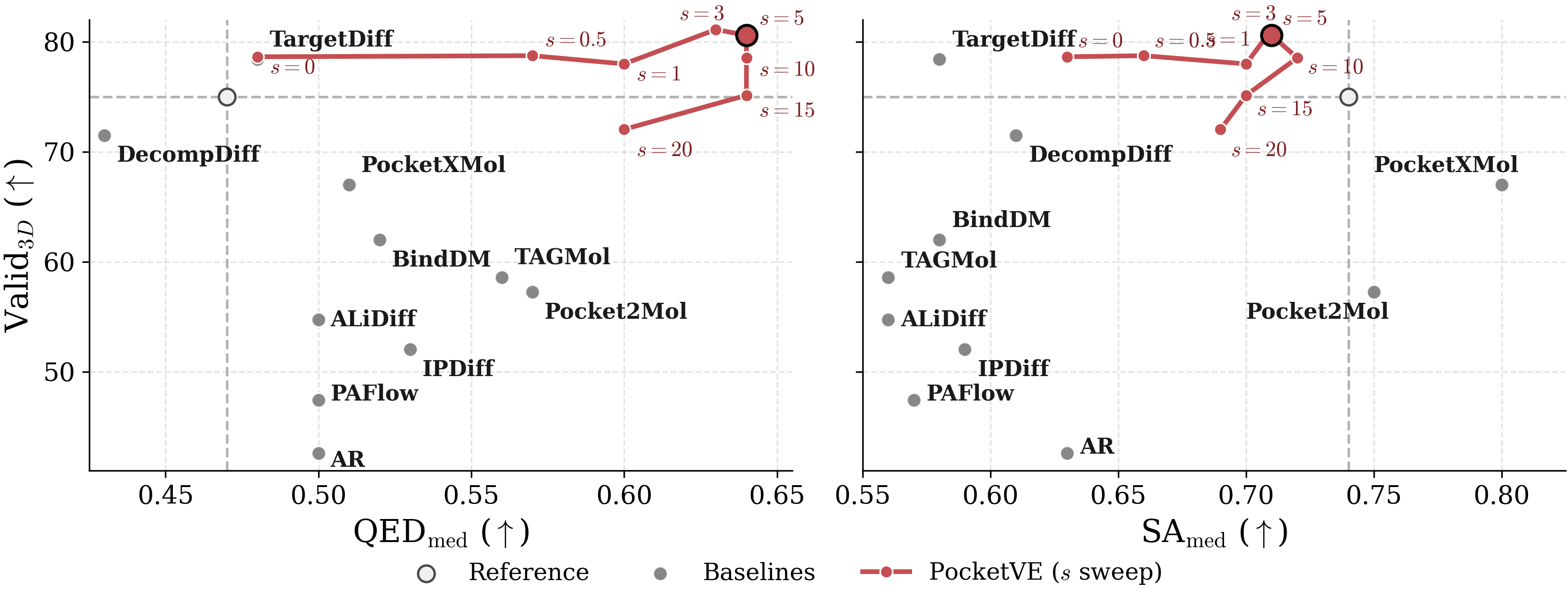}

\captionsetup{skip=4pt}
\caption{Property--validity trade-off across baselines and PocketVE guidance scales.
The left and right panels report QED--Valid$_{3\text{D}}$ and SA--Valid$_{3\text{D}}$, respectively.
Only baselines with available GenBench3D Valid$_{3\text{D}}$ values are shown.
PocketVE moves toward higher drug-likeness and synthesizability while retaining high 3D validity under moderate CFG scales.}
\label{fig:appendix_pareto_property_valid3d}

\end{figure}

\section{Additional Distance-Distribution Diagnostics}
\label{appendix:jsd_overlay}

Figure~\ref{fig:appendix_jsd_overlay} makes the JSD diagnostics in the main text visually inspectable rather than purely numeric.
Its purpose is to show what the all-atom distance divergence actually looks like in distribution space, and to verify that the quantitative JSD ranking in Figure~\ref{fig:jsd_fragment_main} corresponds to visible shifts in the intraligand distance profile.
Methods with larger JSD show broader distortions in the short- and mid-range distance modes, whereas PocketVE remains comparatively close to the reference distribution.

\begin{figure}[htbp]
\centering
\includegraphics[width=\textwidth]{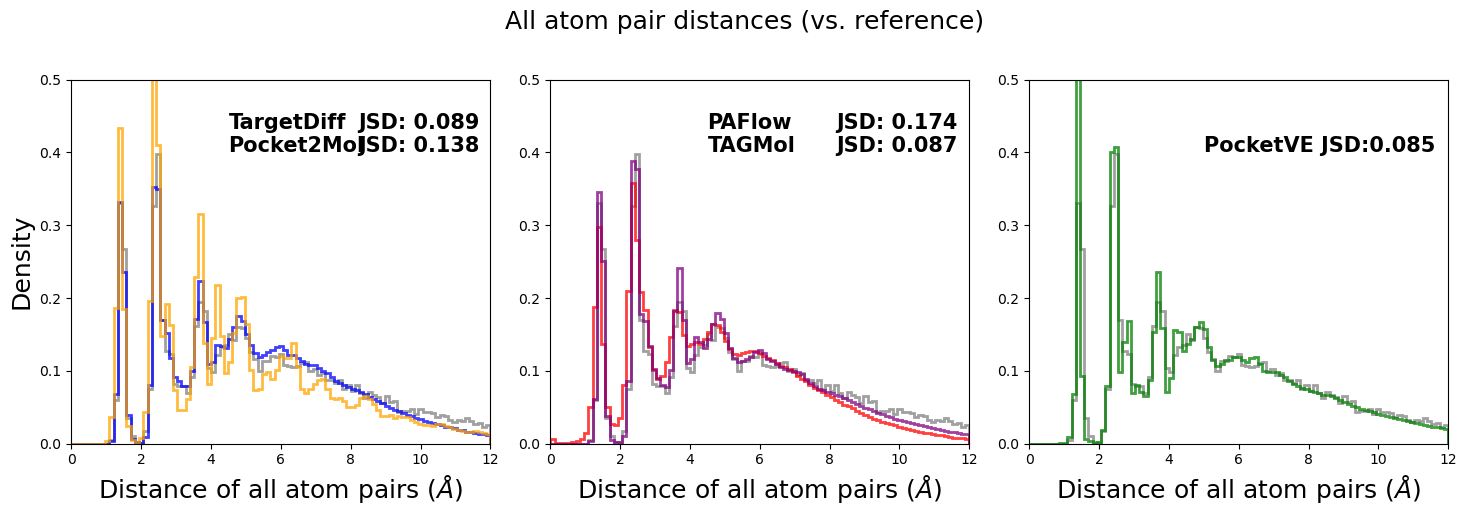}

\captionsetup{skip=4pt}
\caption{All-atom pair distance distributions compared against the reference ligands in the test set, following the same TargetDiff-style logic used for JSD diagnostics in the main text.
Each panel overlays the empirical distribution of reference intraligand atom-pair distances with the corresponding generated distribution, and reports the resulting Jensen--Shannon divergence.
The PocketVE panel corresponds to guided sampling at $s=5$.
Lower JSD indicates closer agreement with the reference distance profile, but this remains a distribution-fidelity diagnostic rather than an explicit physical-validity metric.}
\label{fig:appendix_jsd_overlay}

\end{figure}

\section{Test-Time Protein Perturbation}
\label{appendix:test_time_perturbation}

We use this experiment as a controlled stress test rather than as a standard evaluation setting.
At inference time, Gaussian noise is added only to the protein pocket coordinates, while ligand initialization, sampling schedule, and CFG scale are kept fixed at the main operating point $s=5$.
This setup probes whether training-time pocket perturbation improves tolerance to small errors in the conditioning geometry.

\begin{table}[htbp]
\centering

\captionsetup{skip=4pt}
\caption{Test-time protein perturbation stress test under inference-time pocket noise.
Gaussian noise with scale $\sigma_{\mathrm{pocket}}$ is added to pocket coordinates at inference time.
We compare PocketVE trained without protein perturbation, PocketVE trained with $\min(0.1\sigma,0.5)$ perturbation, and PocketVE trained with $\min(0.3\sigma,1.5)$ perturbation using the same four JSD diagnostics as in the main text together with median property metrics and key GenBench3D geometry metrics.
$\sigma_{\mathrm{pocket}}$ is measured in \AA{}.
Lower Local, 12\AA{}, CC-2\AA{}, Atom, Vina, Vina Min, Strain, and Centroid are better; higher QED, SA, Valid$_{3\text{D}}$, and Clash-Free are better.}
\label{tab:appendix_test_time_perturbation}
\scriptsize
\setlength{\tabcolsep}{2.9pt}
\resizebox{\linewidth}{!}{
\begin{tabular}{l|c|cccc|cccc|cccc}
\toprule
 & & \multicolumn{4}{c|}{JSD $(\downarrow)$} & \multicolumn{4}{c|}{Property / Affinity} & \multicolumn{4}{c}{Geometry} \\
\cmidrule(lr){3-6}\cmidrule(lr){7-10}\cmidrule(lr){11-14}
Method & $\sigma_{\mathrm{p}}$ & Local & 12\AA{} & CC-2\AA{} & Atom & QED $\uparrow$ & SA $\uparrow$ & Vina $\downarrow$ & \makecell{Vina\\Min $\downarrow$} & \makecell{Valid$_{3\text{D}}$\\$\uparrow$} & \makecell{Strain\\$\downarrow$} & \makecell{Clash-\\Free $\uparrow$} & \makecell{Centroid\\$\downarrow$} \\
\midrule
\multirow{5}{*}{\makecell[l]{PocketVE\\w/o perturb.}} & 0.0 & 0.265 & 0.0690 & 0.2750 & 0.1422 & 0.63 & 0.73 & -7.83 & -8.12 & 76.17 & 121.7 & 94.61 & 1.274 \\
 & 0.1 & 0.264 & 0.0672 & 0.2739 & 0.1419 & 0.65 & 0.73 & -7.81 & -8.15 & 79.48 & 127.7 & 94.76 & 1.301 \\
 & 0.3 & 0.269 & 0.0678 & 0.2754 & 0.1487 & 0.63 & 0.73 & -7.10 & -7.95 & 78.03 & 136.2 & 90.34 & 1.337 \\
 & 0.5 & 0.270 & 0.0713 & 0.2832 & 0.1552 & 0.64 & 0.74 & -5.90 & -7.58 & 79.28 & 130.6 & 81.14 & 1.403 \\
 & 1.0 & 0.275 & 0.0693 & 0.2802 & 0.1655 & 0.61 & 0.72 & -2.90 & -6.60 & 76.69 & 159.7 & 44.80 & 1.795 \\
\midrule
\multirow{5}{*}{\makecell[l]{PocketVE\\w/ $\min(0.1\sigma,0.5)$}} & 0.0 & 0.255 & 0.0646 & 0.2674 & 0.1225 & 0.65 & 0.71 & -7.87 & -8.18 & 80.05 & 128.4 & 93.79 & 1.250 \\
 & 0.1 & 0.263 & 0.0645 & 0.2689 & 0.1235 & 0.63 & 0.72 & -7.65 & -7.98 & 79.72 & 122.3 & 94.33 & 1.283 \\
 & 0.3 & 0.261 & 0.0640 & 0.2649 & 0.1261 & 0.64 & 0.73 & -6.82 & -7.65 & 78.43 & 117.6 & 91.26 & 1.281 \\
 & 0.5 & 0.268 & 0.0660 & 0.2745 & 0.1297 & 0.64 & 0.72 & -5.68 & -7.38 & 80.12 & 124.2 & 77.90 & 1.349 \\
 & 1.0 & 0.263 & 0.0629 & 0.2660 & 0.1293 & 0.63 & 0.70 & -2.75 & -6.63 & 80.56 & 149.1 & 41.26 & 1.651 \\
\midrule
\multirow{5}{*}{\makecell[l]{PocketVE\\w/ $\min(0.3\sigma,1.5)$}} & 0.0 & 0.265 & 0.0639 & 0.2637 & 0.1305 & 0.63 & 0.75 & -7.64 & -7.98 & 78.96 & 115.8 & 92.93 & 1.273 \\
 & 0.1 & 0.261 & 0.0652 & 0.2613 & 0.1340 & 0.65 & 0.76 & -7.48 & -7.87 & 80.68 & 104.7 & 94.32 & 1.294 \\
 & 0.3 & 0.266 & 0.0671 & 0.2664 & 0.1341 & 0.65 & 0.76 & -6.73 & -7.53 & 80.77 & 110.4 & 89.63 & 1.316 \\
 & 0.5 & 0.265 & 0.0653 & 0.2646 & 0.1393 & 0.65 & 0.77 & -5.78 & -7.36 & 81.15 & 110.5 & 78.95 & 1.353 \\
 & 1.0 & 0.266 & 0.0672 & 0.2658 & 0.1441 & 0.62 & 0.74 & -2.52 & -6.30 & 76.45 & 134.6 & 41.93 & 1.589 \\
\bottomrule
\end{tabular}
}

\end{table}

Across the sweep, both perturbation-trained variants generally preserve lower distance-distribution JSDs than the model trained without pocket perturbation.
The stronger $0.3\sigma$ training perturbation further improves SA and strain-energy behavior at several noise levels, while the milder $\min(0.1\sigma,0.5)$ training perturbation retains the strongest Vina-based scores at low test-time noise.
However, severe test-time corruption still substantially weakens Vina-based scores and clash-free rates for all three models.
These results suggest that protein perturbation mainly acts as a robustness regularizer for the conditioning geometry, rather than making the sampler invariant to arbitrarily corrupted pockets.

\section{Pocket-Diagnostic Details}
\label{appendix:pocket_diagnostics}

For pocket shuffling, we apply a fixed cyclic permutation with no self-matches across the 100 test pockets.
Each generated ligand is translated from its source pocket center to the target pocket center, then evaluated against the mismatched target.
Correct and shuffled conditions contain the same 9,471 ligands.
We first aggregate within each target pocket, average pockets equally, and form percentile 95\% CIs from 10,000 paired pocket-bootstrap replicates.
This is a post-hoc full-pipeline specificity control; it does not regenerate ligands while conditioning on an incorrect pocket and does not isolate a causal network component.

For PoseCheck, we evaluate original poses without redocking, summarize poses within each pocket, and average the 100 pockets equally.
Table~\ref{tab:posecheck_main} summarizes clashes, clash-free rates, and total interactions, and Table~\ref{tab:posecheck_categories} gives the full contact-category profile.
The changes are mixed: PocketVE does not maximize every interaction category, and the reference ligands themselves average 11.79 total interactions, close to PocketVE's 11.99.
Thus, total contact count should not be interpreted as interaction quality.

\begin{table}[htbp]
\centering

\captionsetup{skip=4pt}
\caption{Pocket-equal PoseCheck estimates on original poses with 10,000-replicate pocket-bootstrap 95\% CIs. PoseCheck clash-free denotes the percentage of poses with zero PoseCheck clashes and differs from the GenBench3D clash-free metric in Table~\ref{tab1}.}
\label{tab:posecheck_main}
\small
\setlength{\tabcolsep}{5pt}
\begin{tabular}{l|ccc}
\toprule
Method & Mean clashes $\downarrow$ & Clash-free (\%) $\uparrow$ & Mean interactions \\
\midrule
PocketVE & \textbf{8.52 [7.12, 10.10]} & \textbf{4.54 [2.91, 6.60]} & 11.99 [10.98, 13.01] \\
TargetDiff & 12.95 [11.19, 14.89] & 1.65 [1.04, 2.33] & 13.14 [12.15, 14.15] \\
PAFlow & 10.36 [7.87, 14.14] & 3.19 [1.88, 4.73] & 11.65 [10.78, 12.53] \\
TAGMol & 13.88 [11.81, 16.17] & 0.91 [0.31, 1.61] & 12.76 [11.83, 13.72] \\
Reference & 7.79 [6.49, 9.14] & 5.00 [1.00, 10.00] & 11.79 [10.63, 12.99] \\
\bottomrule
\end{tabular}
\end{table}

\begin{table}[htbp]
\centering

\captionsetup{skip=4pt}
\caption{Pocket-equal PoseCheck contact-category estimates with 10,000-replicate pocket-bootstrap 95\% CIs.}
\label{tab:posecheck_categories}
\scriptsize
\setlength{\tabcolsep}{3.5pt}
\begin{tabular}{l|rrrrrr}
\toprule
Method & Total interactions & HBD & HBA & Hydrophobic & VdW & Mean clashes \\
\midrule
PocketVE & 11.99 [10.98, 13.01] & 0.36 [0.30, 0.42] & 1.41 [1.14, 1.68] & 1.06 [0.88, 1.25] & 9.16 [8.47, 9.89] & 8.52 [7.12, 10.10] \\
TargetDiff & 13.14 [12.15, 14.15] & 0.68 [0.60, 0.76] & 1.63 [1.37, 1.90] & 1.15 [0.97, 1.35] & 9.69 [9.02, 10.38] & 12.95 [11.19, 14.89] \\
PAFlow & 11.65 [10.78, 12.53] & 0.39 [0.33, 0.45] & 1.23 [1.00, 1.47] & 1.59 [1.36, 1.83] & 8.44 [7.86, 9.05] & 10.36 [7.87, 14.14] \\
TAGMol & 12.76 [11.83, 13.72] & 0.59 [0.51, 0.68] & 1.50 [1.26, 1.76] & 1.17 [0.97, 1.38] & 9.50 [8.87, 10.15] & 13.88 [11.81, 16.17] \\
Reference & 11.79 [10.63, 12.99] & 0.50 [0.36, 0.65] & 1.96 [1.57, 2.37] & 0.72 [0.49, 0.98] & 8.61 [7.86, 9.38] & 7.79 [6.49, 9.14] \\
\bottomrule
\end{tabular}
\end{table}

Table~\ref{tab:posecheck_paired} reports paired differences with the sign convention PocketVE minus comparator.
Negative clash differences and positive clash-free differences favor PocketVE; interaction-count differences are descriptive because larger is not uniformly better.

\begin{table}[htbp]
\centering

\captionsetup{skip=4pt}
\caption{Paired pocket-level PoseCheck differences with percentile 95\% CIs. Clash-free differences are percentage points.}
\label{tab:posecheck_paired}
\small
\setlength{\tabcolsep}{4pt}
\begin{tabular}{l|rrr}
\toprule
Comparator & \(\Delta\) mean clashes & \(\Delta\) clash-free & \(\Delta\) total interactions \\
\midrule
TargetDiff & -4.42 [-5.92, -3.28] & +2.90 [+1.61, +4.64] & -1.15 [-1.47, -0.85] \\
PAFlow & -1.84 [-5.47, +0.63] & +1.35 [-0.30, +3.40] & +0.34 [-0.18, +0.88] \\
TAGMol & -5.36 [-7.26, -3.78] & +3.63 [+2.05, +5.67] & -0.77 [-1.25, -0.24] \\
Reference & +0.73 [-0.60, +2.17] & -0.46 [-4.37, +2.79] & +0.20 [-0.34, +0.73] \\
\bottomrule
\end{tabular}
\end{table}

\section{Property-Steering Diagnostics}
\label{appendix:property_steering}

We evaluate non-default and conflicting condition vectors with the same fixed PocketVE checkpoint, \(s=5\), 100 pockets, and 10 samples per pocket.
The condition order is (Vina, QED, SA), and larger indices denote more favorable training bins.
Table~\ref{tab:requested_condition_vectors} shows that ordered requests move all three aggregate medians in the requested directions.
Conflicting requests also expose cross-property competition: for example, \((4,0,2)\) improves Vina but lowers QED relative to \((2,4,2)\), while \((2,0,4)\) and \((2,4,0)\) reverse the QED--SA preference at similar Vina levels.
The non-monotonic outcomes preclude an interpretation as calibrated independent control.

\begin{table}[htbp]
\centering

\captionsetup{skip=4pt}
\caption{Aggregate property medians under ordered and conflicting requested bins. All rows use the same checkpoint and \(s=5\).}
\label{tab:requested_condition_vectors}
\small
\setlength{\tabcolsep}{8pt}
\begin{tabular}{c|ccc}
\toprule
Requested bins (Vina, QED, SA) & Vina Score median & QED median & SA median \\
\midrule
\((0,0,0)\) & -3.294 & 0.160 & 0.520 \\
\((1,1,1)\) & -5.177 & 0.284 & 0.540 \\
\((2,2,2)\) & -7.099 & 0.589 & 0.660 \\
\((3,3,3)\) & -7.288 & 0.619 & 0.690 \\
\((4,4,4)\) & -7.881 & 0.642 & 0.710 \\
\midrule
\((4,0,2)\) & -7.515 & 0.604 & 0.700 \\
\((2,4,2)\) & -7.144 & 0.670 & 0.650 \\
\((2,0,4)\) & -6.937 & 0.392 & 0.690 \\
\((2,4,0)\) & -6.865 & 0.605 & 0.530 \\
\bottomrule
\end{tabular}
\end{table}

Table~\ref{tab:cfg_property_attainment} reports distribution-level attainment under favorable joint requests.
Rates use successfully evaluated molecules as the denominator.
The Vina column uses \texttt{vina\_score}, whereas training boundaries use \texttt{vina\_dock}; consequently, we report the score shift but do not treat it as strict Vina-bin attainment.

\begin{table}[htbp]
\centering

\captionsetup{skip=4pt}
\caption{Property-distribution diagnostics under CFG. QED+SA top requires both favorable property bins; all three top additionally reports the more stringent three-way criterion using the available score-only Vina boundary. Vina values are interpreted directionally because the evaluation and training metric variants differ.}
\label{tab:cfg_property_attainment}
\small
\setlength{\tabcolsep}{5pt}
\begin{tabular}{c|r|rrrrr}
\toprule
Scale & Evaluated \(n\) & QED top & SA top & QED+SA top & All three top & Vina median \\
\midrule
0 & 935 & 14.01\% & 7.06\% & 1.28\% & 0.00\% & -6.704 \\
1 & 961 & 22.58\% & 12.59\% & 3.02\% & 0.00\% & -7.196 \\
5 & 953 & 27.91\% & 19.31\% & 3.99\% & 0.10\% & -7.892 \\
10 & 925 & \textbf{30.59\%} & \textbf{21.51\%} & \textbf{5.51\%} & \textbf{0.11\%} & \textbf{-8.124} \\
\bottomrule
\end{tabular}
\end{table}


\section{CFG Diversity and Reference Coverage}
\label{appendix:cfg_diversity}

Table~\ref{tab:cfg_diversity} aggregates successfully evaluated molecules across all pockets.
Higher CFG mildly reduces molecule-level uniqueness and increases fingerprint similarity, but scaffold entropy and top-scaffold concentration do not deteriorate monotonically.
Reference-scaffold coverage decreases with guidance.
Because these summaries pool molecules across pockets, they characterize global chemical-space behavior and can mask within-pocket convergence or pocket-specific reference recovery.

\begin{table}[htbp]
\centering

\captionsetup{skip=4pt}
\caption{Global diversity and reference-coverage diagnostics under CFG. Reference coverage is the percentage of reference scaffolds recovered in the generated set.}
\label{tab:cfg_diversity}
\small
\setlength{\tabcolsep}{5pt}
\begin{tabular}{c|cc|ccc|cc}
\toprule
 & \multicolumn{2}{c|}{Uniqueness} & \multicolumn{3}{c|}{Scaffold} & \multicolumn{2}{c}{Similarity / Coverage} \\
\cmidrule(lr){2-3}\cmidrule(lr){4-6}\cmidrule(lr){7-8}
Scale & Molecule & Scaffold & Entropy & Top-1 & Top-10 & Pairwise Tan. & Ref. Coverage \\
\midrule
0 & \textbf{99.36\%} & 83.42\% & 0.9481 & 6.95\% & 15.19\% & \textbf{0.0919} & \textbf{13.70\%} \\
1 & 98.65\% & 86.16\% & 0.9616 & \textbf{5.31\%} & 12.59\% & 0.1072 & 12.33\% \\
5 & 97.27\% & \textbf{86.78\%} & \textbf{0.9636} & 5.35\% & \textbf{12.17\%} & 0.1176 & 9.59\% \\
10 & 96.97\% & 84.22\% & 0.9563 & 6.38\% & 13.41\% & 0.1184 & 8.22\% \\
\bottomrule
\end{tabular}
\end{table}

\section{Coordinate-Scale Robustness and Diagnostics}
\label{appendix:coordinate_scale_diagnostics}

We evaluate \(\sigma_{\mathrm{data}}\in\{5,7.5,10,12.5,15\}\) on the same 100 test pockets with 100 generated ligands per pocket, CFG scale \(s=5\), and the generalized-arcsin scheduler.
Accordingly, Table~\ref{tab:sigma_data_sweep} is a robustness sensitivity analysis over the evaluated scale range rather than a sharply optimized ranking.

\begin{table}[htbp]
\centering

\captionsetup{skip=4pt}
\caption{\(\sigma_{\mathrm{data}}\) sweep (100 pockets \(\times\) 100 ligands). Brackets are pocket-bootstrap 95\% CIs for metrics recoverable per pocket; QED, SA, and Vina are evaluator-level point estimates.}
\label{tab:sigma_data_sweep}
\scriptsize
\setlength{\tabcolsep}{2.6pt}
\begin{tabular}{c|c|ccc|ccc}
\toprule
\(\sigma_{\mathrm{data}}\) & Eval. success [CI] & QED & SA & Vina Score & Valid$_{3\text{D}}$ [CI] & Strain median [CI] & Clash-free [CI] \\
\midrule
5.0 & 93.79 [92.11, 95.25] & 0.614 & 0.736 & -7.453 & 79.01 [77.31, 80.73] & 126.46 [111.74, 137.27] & 94.47 [90.63, 97.62] \\
7.5 & 93.33 [91.72, 94.78] & 0.606 & 0.764 & -7.216 & 80.65 [78.91, 82.36] & 102.83 [91.59, 113.09] & 93.72 [89.65, 96.96] \\
10.0 & 94.55 [93.24, 95.68] & 0.613 & 0.725 & -7.370 & 79.48 [77.75, 81.12] & 126.16 [113.83, 139.25] & 94.01 [90.05, 97.26] \\
12.5 & 93.47 [92.03, 94.74] & 0.608 & 0.736 & -7.430 & 78.66 [77.01, 80.35] & 128.86 [114.42, 143.01] & 94.23 [90.20, 97.37] \\
15.0 & 91.85 [90.42, 93.21] & 0.597 & 0.748 & -7.306 & 81.73 [80.13, 83.27] & 125.58 [112.58, 139.49] & 93.70 [89.40, 97.10] \\
\bottomrule
\end{tabular}
\end{table}

No setting dominates all metrics: \(\sigma_{\mathrm{data}}=10\) has the highest evaluation success, \(5\) has the most favorable mean Vina score, \(7.5\) has the highest SA and lowest median strain, and \(15\) has the lowest evaluation success.
Distributional metrics are similarly stable and non-monotonic: JSD-All-12\AA{} ranges from 0.0592 to 0.0665, JSD-CC-2\AA{} from 0.2533 to 0.2819, and atom-type JS from 0.1215 to 0.1458.
We therefore use \(\sigma_{\mathrm{data}}=10\) as a fixed operating choice within a robust range, not as a sharply optimized constant.

Table~\ref{tab:backbone_sampler_diagnostics} complements the scale sweep with targeted backbone and coordinate-scale diagnostics.
The sampler-control result is reported with the component-wise ablation in Appendix~\ref{appendix:component_ablation}; removing sampling-time noise sharply lowers evaluation success and geometric quality, indicating that the reverse-process noise treatment is an important part of the integrated sampler.
The VEDA-style row retains the TAGMol architecture but uses \(\sigma_{\mathrm{data}}=1\) and \(s=0\), so it tests the bundled formulation rather than isolating \(\sigma_{\mathrm{data}}\).
Removing the \(\sigma_{\mathrm{data}}\) factor in Eq.~\eqref{eq:appendix_scaled_input} leaves reconstruction and ligand-only geometry metrics similar but destroys pocket-aware spatial compatibility, as reflected by pathological Vina scores and a collapse in clash-free poses.
The no-Eq.~\eqref{eq:appendix_scaled_input} row is interpreted as a targeted compatibility diagnostic rather than as an isolated causal estimate of all geometric gains.

\begin{table}[htbp]
\centering

\captionsetup{skip=4pt}
\caption{Backbone and coordinate-scale diagnostics ($100$ pockets \(\times\) $10$ ligands). QED, SA, and Vina are means; strain is the pooled median. The VEDA-style and no-Eq.~\eqref{eq:appendix_scaled_input} rows are targeted bundled diagnostics rather than isolated causal comparisons.}
\label{tab:backbone_sampler_diagnostics}
\scriptsize
\setlength{\tabcolsep}{2.8pt}
\begin{tabular}{l|rrrrrrrr}
\toprule
Setting & Eval. success & QED & SA & Vina & Valid$_{3\text{D}}$ & Strain & Clash-free & Centroid (\AA) \\
\midrule
PocketVE setup & 95.30 & 0.624 & 0.722 & -7.589 & 78.72 & 127.94 & 94.23 & 1.25 \\
VEDA-style ($\sigma_{\mathrm{data}}=1$, $s=0$) & 89.50 & 0.400 & 0.563 & -5.454 & 43.43 & 1007.53 & 91.40 & 1.19 \\
Without scale preservation & 97.50 & 0.628 & 0.785 & +92.237 & 82.39 & 100.41 & 3.39 & 3.33 \\
\bottomrule
\end{tabular}
\end{table}

\section{Additional Case Studies}
\label{appendix:case_studies}

Figure~\ref{fig:additional_case_studies} provides three further case studies.
It compares the reference pose with representative baseline generations and the PocketVE sample for specific binding pockets.
These examples are included to show that the qualitative trends discussed in the main text are not limited to a single target.

\begin{figure}[htbp]
\centering
\includegraphics[width=\textwidth]{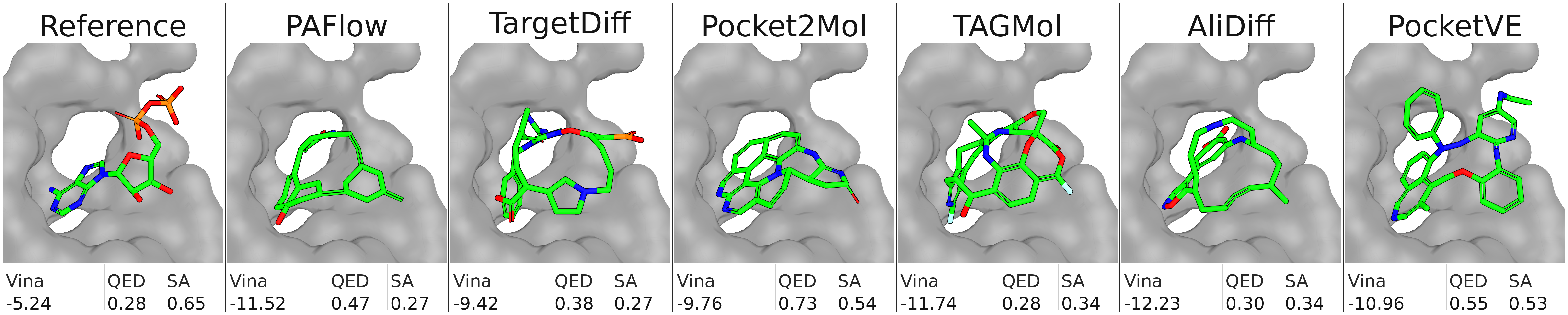}

\vspace{0.5em}
\includegraphics[width=\textwidth]{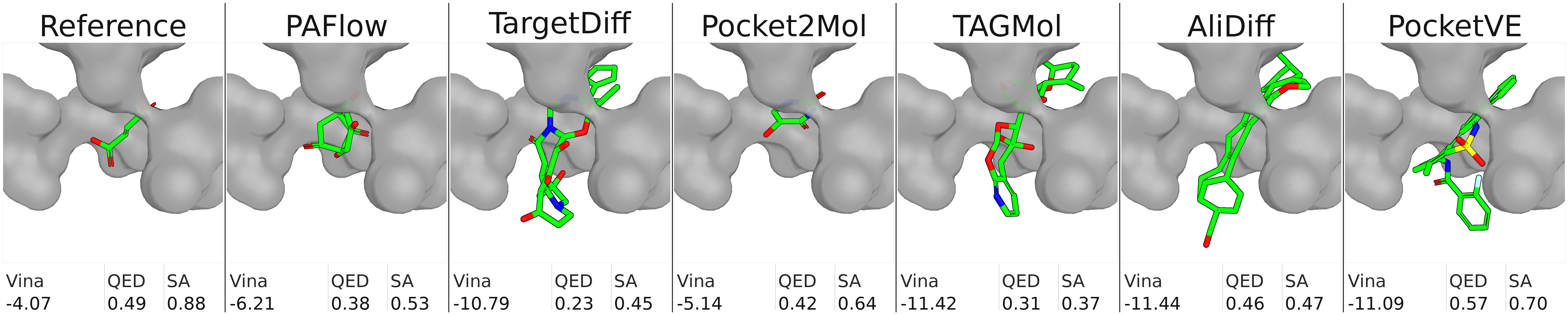}

\vspace{0.5em}
\includegraphics[width=\textwidth]{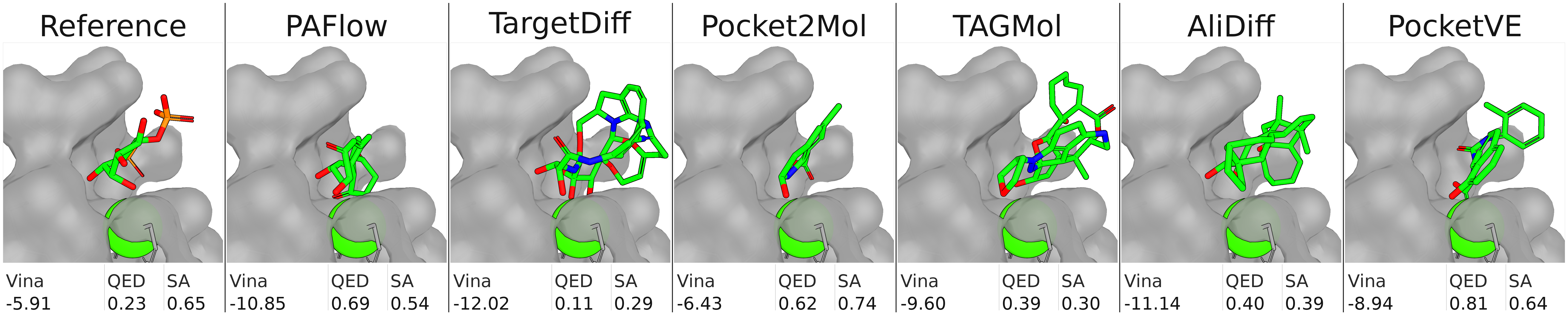}

\captionsetup{skip=4pt}
\caption{Additional case studies for PDB ID 5W2G (top), 3W83 (middle), and 1DJY (bottom).}
\label{fig:additional_case_studies}

\end{figure}



\end{document}